\newtheorem{theorem}{Theorem}
\newtheorem{lemma}{Lemma}
\newtheorem{coro}{Corollary}
\newtheorem{defi}{Definition}
\newtheorem{prop}{Proposition}
\theoremstyle{definition}
\newtheorem{remark}[defi]{Remark}
\begin{document}


\title[Unitary Dilations of Discrete-Time Quantum-Dynamical Semigroups]{Unitary Dilations of Discrete-Time Quantum-Dynamical Semigroups}

\author{Frederik \surname{vom Ende}}\thanks{(corresponding author)}\email[E-mail: ]{frederik.vom-ende@tum.de}
\affiliation{ 
Technische Universit{\"a}t M{\"u}nchen, 
	Dept. Chem., Lichtenbergstra{\ss}e 4, 85747 Garching, Germany}
	\affiliation{   Munich Centre for Quantum Science and Technology (MCQST),  Schellingstraße 4, 80799~M{\"u}nchen, Germany
}

\author{Gunther \surname{Dirr}}\email[E-mail: ]{dirr@mathematik.uni-wuerzburg.de}
\affiliation{%
Universit{\"a}t W{\"u}rzburg, 
        Institut f{\"u}r Mathematik, Emil-Fischer-Stra{\ss}e 40, 97074 W{\"u}rzburg, Germany
}%

\date{\today}

\begin{abstract}
We show that the discrete-time evolution of an open quantum system generated by a single quantum channel $T$ 
can be embedded in the discrete-time evolution of an enlarged closed quantum system, i.e.~we construct
a unitary dilation of the discrete-time quantum-dynamical semigroup $(T^n)_{n \in \mathbb N_0}$. In the case of a 
cyclic channel $T$, the auxiliary space may be chosen (partially) finite-dimensional. We further investigate 
discrete-time quantum control systems generated by finitely many commuting quantum channels and prove
a similar unitary dilation result as in the case of a single channel.
\end{abstract}

\pacs{02.20.-a,  
02.30.Tb, 
03.65.Fd, 
03.67.-a 
}

\keywords{quantum channel; unitary dilation; quantum-dynamical semigroup; open quantum system;}

\maketitle
\section{Introduction}

%

Stimulated by the seminal work of Arveson\cite{arveson1972}, Lindblad\cite{Lindblad76}, Gorini, Kossakowski
and Sudarshan\cite{GKS76} in the mid 1970s many efforts have been made to obtain dilation results of various
degrees of generality for semigroups of completely positive operators.

For instance, Davies\cite[Ch.~9, Thm.~4.3]{Davies} proved that for any continuous semigroup $(T_t)_{t\in\mathbb R^+_0}$
of completely positive, unital operators acting on a finite-dimensional Hilbert spaces $\mathcal H$
(or, more precisely, on the corresponding $W^*$-algebra $\mathcal B(\mathcal H)$
of all bounded linear operators) there exists a Hilbert space $\mathcal K$, a pure state
$\omega$ in $\mathcal B(\mathcal K)$, and a strongly continuous one-parameter group $(U_t)_{t\in\mathbb R}$
of unitaries on $\mathcal H\otimes\mathcal K$ such that
\begin{align}\label{eq:0}
T_t(A)=\operatorname{tr}_{\omega}(U_t^\dagger(A\otimes\operatorname{id}_{\mathcal K})U_t)
\end{align}
holds for all $A \in \mathcal B(\mathcal H)$ and $t\in\mathbb R^+_0$. For infinite-dimensional
$\mathcal H$, there is a whole zoo of similar results. While Davies \cite{davies78,Davies79}, 
Evans\cite{Evans1976groups}, and Evans \& Lewis\cite{EvansLewis1976,EvansLewisOverview77} focused
primarily on one-parameter semigroups $(T_t)_{t \in\mathbb R^+_0}$ of different continuity type, 
K\"ummerer\cite{Kuemmerer83} discussed at great length the discrete-time case $(T^n)_{n \in \mathbb N}$. 
However, to the best of our knowledge, for arbitrary Hilbert spaces a dilation result 
of the above form \eqref{eq:0} is still not available. 

In the following, we give a short chronological overview on those contributions which are relevant
and closely related to our work. Further results and a brief survey over the latest developments  
can be found in \cite{Bhat1996, bhat_skeide_2000,muhly_solel_2002, Gaebler2015} and \cite{sawada2018}.
For the readers' convenience we collected some standard terminology and basic results on 
dilations and (completely) positive maps, which are well known to experts in this area, in the glossary 
of Appendix \ref{sec:ditalions}.\medskip

In \cite[Thm.1 and Thm.2]{Evans1976groups}, Evans shows that every family $(T_g)_{g \in G}$ of completely 
positive, unital operators acting on a unital $C^*$-algebra $\mathcal A$ and indexed by an arbitrary
group $G$ admits a unitary dilation, i.e.
\begin{align*}
T_g(A) = E(U_g^\dagger J(A)U_g)
\end{align*}
for all $A \in \mathcal A$ and $g \in G$, where $(U_g)_{g \in G}$ is a unitary representation of $G$
on some Hilbert space $\mathcal K$ and $E$ a conditional expectation with corresponding injection $J$ into $\mathcal B(\mathcal K)$. Remarkably, he need not assume that $g \mapsto T_g$ 
is a group homomorphism. 
His result can be regarded as $C^*$-counterpart to Sz.-Nagy's \cite{nagy1953,nagy1970} and Stroescu's 
\cite{stroescu1973} work on isometric dilations on Hilbert and Banach spaces, respectively. While 
possible generalizations to $W^*$-algebras are addressed by Evans, continuity issues of the map 
$g \mapsto U_g$ are disregarded completely. His proof is based on Stinespring's representation 
$T_g(x) = V^*_g \pi_g(x) V_g$ which of course exists for all $g \in G$. However, he did not exploit
the fact that one can choose a common Hilbert space for all $\pi_g$ which leads to a substantial
simplification in our approach.
In \cite[Thm.2]{EvansLewis1976} Evans \& Lewis focus on norm-continuous semigroups $(T_t)_{t\in\mathbb R^+_0}$
of ultraweakly continuous, completely positive and unital operators acting on a separable
Hilbert space $\mathcal H$. They obtain a unitary dilation 
\begin{align*}
T_t(A) = E(U_t^\dagger J(A)U_t)\,,
\end{align*}
for all $A \in \mathcal B(\mathcal H)$ and $t \in \mathbb R^+_0$, where $(U_t)_{t\in\mathbb R}$ is a
strongly continuous group of unitary operators acting on some extended Hilbert space $\mathcal K$
(and $E, J$ as above). Their proof exploits the fact that the explicit form of the
infinitesimal generator of $(T_t)_{t\in\mathbb R^+_0}$ is well-known due to the work of
Lindblad\cite{Lindblad76}.
In \cite{EvansLewisOverview77} Evans \& Lewis provide an overview on dilation results known at that time
including some minor generalisations of their previous work \cite{EvansLewis1976}. 

\medskip
For locally compact groups $G$, Davies \cite[Thm.2.1 and Thm.3.1]{davies78} obtains the following 
rather general result: 
Let $(T_g)_{g\in G}$ be a strongly continuous family of ultraweakly continuous, completely positive
and unital 
operators on $\mathcal B(\mathcal H)$. Then there exists
a Hilbert space $\mathcal K$, a strongly continuous unitary representation $U$ of $G$ on 
$\mathcal H \otimes \mathcal K$ and conditional expectations 
$E_n:\mathcal B(\mathcal H\otimes\mathcal K)\to\mathcal B(\mathcal H)$ (for all $n \in \mathbb N$)
such that
\begin{align}\label{eq:0_1}
T_g(A)=\lim_{n\to\infty} E_n(U_g^\dagger(A\otimes\operatorname{id}_{\mathcal K})U_g)
\end{align}
holds for all $A \in \mathcal B(\mathcal H)$ and all $g\in G$ in the weak operator topology. Here, 
$E_n$ is of the form $E_n(A):=V_n^\dagger AV_n$ where $V_n:\mathcal H\to\mathcal H\otimes\mathcal K$
are isometric embeddings. This seems to be the result which is closest to (\ref{eq:0}) in infinite 
dimensions, but it is not known whether the limit in (\ref{eq:0_1}) is
necessary or not\cite[cf.~p.~335]{davies78}. 
For discrete-time systems $T_n := T^n$, $n \in \mathbb N$ or, more accurately, for an appropriate
extension to $G = \mathbb Z$ Davies' approach and ours are quite similar---in particular due to the 
fact that in this case, the limit in \eqref{eq:0_1} can be avoided as $G$ is discrete. More precisely, 
Davies first extends the state space from 
$\mathcal H$ to $L^2(\mathbb Z, \mathcal H) \cong \ell_2(\mathbb Z) \otimes \mathcal H$ 
such that $(T_n)_{n \in \mathbb Z}$ can be regarded as one completely positive, unital operator from 
$\mathcal B(\mathcal H)$ to $\mathcal B(L^2(\mathbb Z, \mathcal H))$. He then applies Stinespring's 
representation theorem to obtain an dilation of $(T_n)_{n \in \mathbb Z}$ on a larger state space
$L^2(\mathbb Z, \mathcal H) \otimes \mathcal{K}$. We, however, exploit Stinespring's
result first to guarantee for all $n \in \mathbb N$ a dilation of the form 
\begin{align*}
T^n(A)=\operatorname{tr}_{\omega}((U^\dagger)^n(A\otimes\operatorname{id}_{\mathcal K})U^n)\,,
\end{align*}
where $\omega$ and $\mathcal K$ are independent of $n \in \mathbb N$, to then enlarge the state space to 
$\ell_2(\mathbb Z) \otimes \mathcal H \otimes \mathcal K \cong L^2(\mathbb Z, \mathcal H \otimes \mathcal K)$. 
Although both approaches differ only in the order of the construction steps the resulting dilations
behave quite differently: While Davies' construction is more ``flexible'' as one can see, e.g., in Section
\ref{section_control}, Remark \ref{ch_4_bem_7}.2, ours yields the desired partial trace
structure of \eqref{eq:0} which is
in general not satisfied for \eqref{eq:0_1} even if the limit can be avoided\cite{Note1}
.\medskip

K\"ummerer\cite{Kuemmerer83} discussed the discrete-time case $(T^n)_{n \in \mathbb N}$ in detail.
However, his setting significantly differs from ours. In his sense, a discrete-time quantum dynamical
system consists of a triple $(\mathcal A,\varphi,T)$, where $\mathcal A \subset \mathcal B(\mathcal H)$ 
is a $W^*$-algebra, $T$ an ultraweakly continuous, 
completely positive and unital operator which acts on $\mathcal A$ and leaves a faithful normal
state $\varphi \in \mathcal A^*$ invariant, i.e.~$\varphi\circ T = \varphi$. The latter condition can 
be thermodynamically motivated as $\varphi$ can be interpreted as an equilibrium state which is preserved 
under composition with $T$ and every power of it. This constraint on the quantum channel $T$ obviously 
narrows down the possible choices of $T$. Even more restrictive is K\"ummerer's definition of a first
order dilation of $(\mathcal A,\varphi,T)$. Here, he requires the existence of a reversible quantum 
dynamical system $(\mathfrak A,\hat{\varphi},\hat{T})$, i.e.~$\hat{T}$ is a $*$-automorphism on $\mathfrak A$ and $E$ is a conditional expectation with corresponding injection $J$ such that 
\begin{align*}
T(A) = E\big(\hat{T}(J(A))\big)\qquad \text{and} \qquad \varphi\circ E = \hat{\varphi}\,.
\end{align*}
for all $A \in \mathcal A$. In doing so, the condition $\varphi\circ E = \hat{\varphi}$ is the delicate
part. For instance, the standard Kraus/Stinspring representation which constitutes a (first order) dilation
does in general not satisfy this condition---note that, by definition, $\hat{\varphi}$ has to be a faithful 
normal state---and therefore even first order dilations in K\"ummerer's sense need not exist as the existence
of a $\varphi$-adjoint is not guaranteed, cf.~\cite[Prop.~2.1.8 ff.]{Kuemmerer83}. Within his 
setting, K\"ummerer proved (cf.~Thm 4.2.1, Cor 4.2.3) that a quantum dynamical system
$(\mathcal B(\mathcal H),\varphi,T)$ has a dilation of first order if and only if it admits a Markovian one 
of first order which in turn implies that $(\mathcal B(\mathcal H),\varphi,T)$ also allows a Markovian dilation 
of arbitrary order. His definition of Markovianity can be regarded as a $W^*$-algebra counterpart of a 
well-known subspace condition\cite{Note2} 
which guarantees for contractions on Hilbert spaces that a first order unitary dilation 
$T = P_{\mathcal H}U|_{\mathcal H}$ is already a dilation (of arbitrary order), 
i.e.~$T^n = P_{\mathcal H}U^n|_{\mathcal H}$ holds for all $n \in \mathbb N$. To achieve a Markovian dilation
he imbedded the given $W^*$-algebra $\mathcal A = \mathcal B(\mathcal H)$ in an infinite product/sum of 
$W^*$-algebras. Our approach considerably deviates from his construction since we use first order 
Stinespring/Kraus dilations for $T^n$ which of course exist for all $n \in \mathbb N$ but in general 
do not satisfy K\"ummerer's faithful state condition.\medskip

Probably one of the strongest semigroup dilation results so far was presented
by Gaebler\cite[Thm.~5.10]{Gaebler2015}. Using Sauvageot's theory he showed that for a norm-continuous
semigroup $( T_t)_{t\in\mathbb R^+}$ of ultraweakly continuous, completely positive and unital operators
acting on a $W^*$-algebra $\mathcal A$ with separable pre-dual, there exists a unital dilation 
$(\mathfrak A,(\sigma_t)_{t\in\mathbb R^+_0},J,E)$ of $( T_t)_{t\in\mathbb R^+_0}$ 
(cf.~Def.~\ref{def:dilation-von-Neumann-algebras}) where $\mathfrak A$ has separable pre-dual and 
$((\sigma_t)_{t\in\mathbb R^+_0},J,E)$ satisfies the strong dilation property, i.e.~$T_t\circ E=E\circ\sigma_t$
for all $t\in\mathbb R^+_0$. The strength of this result, however, comes at the cost of lacking any partial
trace structure of the form (\ref{eq:0}).\medskip

The paper is organized as follows: After some preliminaries on trace-class operators and quantum channels,
we present our main results in Section \ref{sec:main_results}: (i) For discrete-time quantum-dynamical semigroups
on separable Hilbert spaces, a unitary dilation of the form (\ref{eq:0}) is proved. (ii) If the semigroup
in question is generated by a cyclic quantum channel, then the auxiliary Hilbert space can be chosen partially
finite-dimensional. (iii) Finally, for discrete-time quantum control systems, the control of which can be 
switched between a finite number of commuting channels, a unitary dilation of the form (\ref{eq:0}) is derived.

\section{Preliminaries and Notation}\label{sec:perlim}

In this section, we fix our notation and recall some basic material on Schr\"odinger and
Heisenberg quantum channels. These results should be known to experts in this area.

Henceforth, let $\mathcal G, \mathcal H$ 
be infinite-dimensional separable complex Hilbert spaces and $\mathcal X, \mathcal Y$ 
real or complex Banach spaces. By convention, all scalar products on complex Hilbert spaces are assumed to be conjugate linear in the first argument and linear in the second. Moreover, let $\mathcal B(\mathcal G), \mathcal B(\mathcal H)$ 
denote the set of all bounded operators acting on $\mathcal G, \mathcal H$ 
and let $\mathcal B(\mathcal X), \mathcal B(\mathcal Y)$ 
be defined respectively.

Recall that an operator $A\in\mathcal B(\mathcal H)$ on a complex Hilbert space is said to be positive semi-definite, denoted by $A\geq 0$, iff $\langle x,Ax\rangle\geq 0$ for all $x\in\mathcal H$. Because we consider complex Hilbert spaces, $A\geq 0$ directly implies that $A$ is self-adjoint via the polarization identity, cf.~\cite[Prop.~2.4.6]{kadisonringrose1983}---else, self-adjointness would have to be required in the definition of $A\geq 0$.

\subsection{Quantum Channels}
\label{sec:qu-channels}

Let $\mathcal B^1(\mathcal H) \subset \mathcal B(\mathcal H)$ be the subset of all \emph{trace-class
operators}, i.e.~$\mathcal B^1(\mathcal H)$ is the largest subspace of $\mathcal B(\mathcal H)$
which allows to define the \emph{trace} of an operator $A$ via
\begin{align}\label{def_trace_eq}
\operatorname{tr}(A):=\sum\nolimits_{i\in I}\langle e_i,Ae_i\rangle
\end{align}
such that the right-hand side of \eqref{def_trace_eq} is finite and independent of the choice
of the orthonormal basis $(e_i)_{i \in I}$. More precisely, $\mathcal B^1(\mathcal H)$ can be defined
either as the set of all compact operators $A \in \mathcal B(\mathcal H)$ whose singular values
$\sigma_n(A)$ are summable, i.e.~
\begin{equation}\label{def1_trace_eq}
\nu_1(A):=\sum\nolimits_{n\in \mathbb{N}} \sigma_n(A) < \infty
\end{equation}
or, equivalently\cite[Thm.VI.21]{ReedSimon1}, as the set of all $A \in \mathcal B(\mathcal H)$ such that 
\begin{align}\label{def2_trace_eq}
\sum\nolimits_{i\in I}\langle e_i, \sqrt{A^\dagger A} e_i\rangle < \infty
\end{align}
is summable for some orthonormal basis $(e_i)_{i \in I}$ of $\mathcal H$.

Because of $\sqrt{A^\dagger A}\geq 0$, all summands in \eqref{def2_trace_eq} are non-negative and 
therefore\cite[Thm.VI.18]{ReedSimon1}, the value of the left-hand side of \eqref{def2_trace_eq} is
independent of the chosen orthonormal basis. Moreover, one has $\operatorname{tr}(\sqrt{A^\dagger A})=\nu_1(A)$
for all $A\in\mathcal B^1(\mathcal H)$ which readily implies $\operatorname{tr}(A)=\nu_1(A)$ if $A\geq 0$.
Finally, we note that for finite-dimensional Hilbert spaces the sets $\mathcal B(\mathcal H)$ and 
$\mathcal B^1(\mathcal H)$ coincide with the set of all linear operators acting on $\mathcal H$ and that 
for arbitrary Hilbert spaces, $\mathcal B^1(\mathcal H)$ constitutes a Banach space with respect to
the \emph{trace norm} $\nu_1$ given by \eqref{def1_trace_eq}. For more on these topics we refer
to \cite[Ch.VI.6]{ReedSimon1} and \cite[Ch.16]{MeiseVogt}.\medskip

An operator $\rho\in\mathcal B^1(\mathcal H)$ which is positive semi-definite and fulfills
$\operatorname{tr}(\rho)=1$ is called a \emph{state} and the set of all states is denoted by
\begin{align*}
\mathbb D(\mathcal H):=\lbrace \rho\in\mathcal B^1(\mathcal H)\,|\,\rho\text{ is state}\rbrace\,.
\end{align*}
A state $\rho$ is said to be \emph{pure} if it has rank one. Certainly, the corresponding definitions
apply to $\mathcal B^1(\mathcal G)$ and $\mathbb D(\mathcal G)$. After these preliminaries, we can 
introduce the key concepts.
\begin{defi}\label{def1}
\begin{itemize}
\item[(a)] A linear map $T:\mathcal B^1(\mathcal H)\to\mathcal B^1(\mathcal G)$ is said to be positive if $T(A)\geq 0$ for all positive semi-definite $A\in \mathcal B^1(\mathcal H)$.
\item[(b)] A linear map $T:\mathcal B^1(\mathcal H)\to\mathcal B^1(\mathcal G)$ is said to be completely positive
if for all $n\in\mathbb N$ the maps $T\otimes\operatorname{id}_n: \mathcal B^1(\mathcal H \otimes \mathbb{C}^n)\to\mathcal B^1(\mathcal G \otimes \mathbb{C}^n)$
are positive.
\item[(c)] A Schr\"odinger quantum channel is a linear, completely positive and trace-preserving map
$T:\mathcal B^1(\mathcal H)\to\mathcal B^1(\mathcal G)$. Furthermore, we define
\begin{align*}
Q_S (\mathcal H,\mathcal G):=
\lbrace T:\mathcal B^1(\mathcal H)\to\mathcal B^1(\mathcal G)\, |\, T\text{ is Schr\"odinger quantum channel}\,\rbrace
\end{align*}
and $Q_S (\mathcal H):= Q_S (\mathcal H,\mathcal H)$.
\end{itemize}
\end{defi}
\noindent Note that Definition \ref{def1} (a) and (b) also make sense for maps from 
$\mathcal B(\mathcal H) \to \mathcal B(\mathcal G)$ instead of $\mathcal B^1(\mathcal H)\to\mathcal B^1(\mathcal G)$.
\medskip

Clearly, every positive, trace-preserving map and thus every Schr\"odinger quantum channel maps states
to states. Further algebraic and topological properties of $Q_S (\mathcal H)$ which are crucial in the 
following are summarized in the following theorem, the proof of which can be found in Appendix \ref{app:monoid}. 

\begin{theorem}\label{thm_monoid}
The set $Q_S (\mathcal H)$ is a convex subsemigroup of $\mathcal B(\mathcal B^1(\mathcal H))$ with unity
element $\operatorname{id}_{\mathcal B^1(\mathcal H)}$. Moreover, $Q_S(\mathcal H)$ is closed in
$\mathcal B(\mathcal B^1(\mathcal H))$ with respect to the weak operator, strong operator and uniform
operator topology.
\end{theorem}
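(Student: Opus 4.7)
The plan is to split the statement into two parts: the algebraic structure (convex subsemigroup with identity) and the topological closedness (in WOT, SOT, and the uniform operator topology). For the first part, I would observe that $\operatorname{id}_{\mathcal B^1(\mathcal H)}$ is obviously completely positive and trace-preserving; that the composition of Schrödinger channels $T_1\circ T_2$ is again one, since $(T_1\circ T_2)\otimes\operatorname{id}_n = (T_1\otimes\operatorname{id}_n)\circ(T_2\otimes\operatorname{id}_n)$ is a composition of two positive maps, and the trace-preservation property is preserved by composition; and that a convex combination $\lambda T_1 + (1-\lambda)T_2$ with $\lambda\in[0,1]$ is trace-preserving by linearity of $\operatorname{tr}$ and completely positive because, for every $n$, the amplification $\lambda(T_1\otimes\operatorname{id}_n) + (1-\lambda)(T_2\otimes\operatorname{id}_n)$ is a positive combination of positive maps.

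For the closedness claim, the key observation is that WOT is coarser than SOT which is coarser than the uniform operator topology on $\mathcal B(\mathcal B^1(\mathcal H))$; since a finer topology admits more closed sets, WOT-closedness automatically gives the other two. So it suffices to take a net $(T_\alpha)\subset Q_S(\mathcal H)$ with WOT-limit $T$ and verify that $T$ is completely positive and trace-preserving. Trace preservation is immediate because $\operatorname{tr}$ is the bounded functional on $\mathcal B^1(\mathcal H)$ corresponding to $\operatorname{id}_{\mathcal H}\in\mathcal B(\mathcal H)\cong(\mathcal B^1(\mathcal H))^*$, so $\operatorname{tr}(T(A))=\lim_\alpha\operatorname{tr}(T_\alpha(A))=\operatorname{tr}(A)$.

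For complete positivity I would fix $n\in\mathbb N$ and argue that $T_\alpha\otimes\operatorname{id}_n \to T\otimes\operatorname{id}_n$ in the WOT on $\mathcal B(\mathcal B^1(\mathcal H\otimes\mathbb C^n))$: any $A\in\mathcal B^1(\mathcal H\otimes\mathbb C^n)$ decomposes into a finite sum $A=\sum_{i,j=1}^n A_{ij}\otimes|i\rangle\langle j|$ with $A_{ij}\in\mathcal B^1(\mathcal H)$, and any bounded functional on $\mathcal B^1(\mathcal H\otimes\mathbb C^n)$ restricts on each block to a bounded functional on $\mathcal B^1(\mathcal H)$, so convergence is reduced to finitely many instances of the given WOT convergence. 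Then, for $A\ge 0$ and $x\in\mathcal H\otimes\mathbb C^n$, the functional $B\mapsto\langle x,Bx\rangle=\operatorname{tr}(B\,|x\rangle\langle x|)$ is continuous on trace class, whence $\langle x,(T\otimes\operatorname{id}_n)(A)x\rangle = \lim_\alpha\langle x,(T_\alpha\otimes\operatorname{id}_n)(A)x\rangle\ge 0$, which gives $T\otimes\operatorname{id}_n\ge 0$.

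The only mildly delicate step, and the one I would treat as the main obstacle, is the passage from WOT convergence of $T_\alpha$ to WOT convergence of the amplifications $T_\alpha\otimes\operatorname{id}_n$. Once this is properly justified via the finite-dimensional block decomposition above, everything else reduces to the fact that positivity and equality to $\operatorname{tr}$ are both closed conditions under pointwise evaluation of bounded functionals on trace class. No hidden difficulty arises from infinite dimensionality of $\mathcal H$ because the amplification only enlarges the space by the finite-dimensional factor $\mathbb C^n$.
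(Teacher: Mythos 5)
Your proposal is correct and follows essentially the same route as the paper's own proof: a brief verification of the algebraic (convex subsemigroup) part, reduction of all three closedness claims to WOT-closedness, trace preservation via testing against $\operatorname{id}_{\mathcal H}$, and complete positivity of the WOT-limit via the finite block decomposition of $\mathcal B^1(\mathcal H\otimes\mathbb C^n)$ to get WOT convergence of the amplifications, followed by testing positivity against rank-one operators $\langle x,\cdot\rangle x$. This matches the paper's Lemmas in Appendix A almost step for step, so nothing further is needed.
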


\begin{remark}\label{rem:boundedness}
\begin{enumerate}
\item
Here one should emphasize that it is not necessary to require boundedness of Schr\"odinger quantum channels. In fact, one can easily prove that any positive linear map is bounded automatically
\cite[Ch.~2, Lemma~2.1]{Davies}.
\item
In Proposition \ref{thm_q_norm_1} we will see that $Q_S (\mathcal H)$ is actually a closed
convex subset of the unit sphere of $\mathcal B(\mathcal B^1(\mathcal H))$.
Note that the existence of non-trivial convex subsets on the unit sphere of 
$\mathcal B(\mathcal B^1(\mathcal H))$ is a consequence of the non-strict convexity of the operator norm.
\end{enumerate}
\end{remark}

The following beautiful and well-known representation result for Schr\"odinger quantum channels which
can be traced back to Kraus is the starting of our work.

\begin{theorem}\label{thm1}
For every $T\in Q_S(\mathcal H)$ there exists a separable Hilbert space $\mathcal K$, a pure
$\omega\in\mathbb D(\mathcal K)$ and a unitary $U\in\mathcal B(\mathcal H\otimes\mathcal K)$ such that
\begin{align}\label{eq:T_stinespring}
T(A)=\operatorname{tr}_{\mathcal K}(U(A\otimes \omega)U^\dagger)
\end{align}
for all $A\in\mathcal B^1(\mathcal H)$.
\end{theorem}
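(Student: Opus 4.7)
The plan is to derive the unitary dilation from the standard infinite-dimensional Stinespring/Kraus \emph{isometric} dilation of $T$, and then promote the isometry to a unitary by a dimension-matching extension.

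First, I would pass to the Heisenberg dual $T^\dagger:\mathcal{B}(\mathcal{H})\to\mathcal{B}(\mathcal{H})$, characterised by $\operatorname{tr}(T(A)B)=\operatorname{tr}(AT^\dagger(B))$. Since $T$ is completely positive and trace-preserving, $T^\dagger$ is a normal, unital, completely positive map on $\mathcal{B}(\mathcal{H})$. Applying Stinespring's theorem and invoking the fact that every normal representation of $\mathcal{B}(\mathcal{H})$ is unitarily equivalent to an amplification $B\mapsto B\otimes\operatorname{id}_{\mathcal{K}}$ on a separable Hilbert space $\mathcal{K}$, I obtain an isometry $V:\mathcal{H}\to\mathcal{H}\otimes\mathcal{K}$ with $T^\dagger(B)=V^\dagger(B\otimes\operatorname{id}_{\mathcal{K}})V$. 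Dualising back yields $T(A)=\operatorname{tr}_{\mathcal{K}}(VAV^\dagger)$ for every $A\in\mathcal{B}^1(\mathcal{H})$.

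Second, I would fix a unit vector $|0\rangle\in\mathcal{K}$, set $\omega:=|0\rangle\langle 0|$ (which is pure), and define $W:\mathcal{H}\otimes\operatorname{span}\{|0\rangle\}\to\mathcal{H}\otimes\mathcal{K}$ by $W(x\otimes|0\rangle):=Vx$. This $W$ is an isometry between two closed subspaces of $\mathcal{H}\otimes\mathcal{K}$, and the plan is to extend it to a unitary $U$ on all of $\mathcal{H}\otimes\mathcal{K}$ by gluing $W$ to an arbitrary unitary identification between the orthogonal complement of $\mathcal{H}\otimes\operatorname{span}\{|0\rangle\}$ and the orthogonal complement of $\operatorname{range}(W)$. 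This is the only genuinely non-routine step: both complements sit inside the separable space $\mathcal{H}\otimes\mathcal{K}$, so both have at most countable Hilbert-space dimension, and one must force them to match. By choosing $\mathcal{K}$ separable and infinite-dimensional at the outset (if Stinespring returns a finite-dimensional $\mathcal{K}$, replace it by $\mathcal{K}\oplus\ell_2(\mathbb{N})$ and extend $V$ by zero on the added summand) and, if necessary, enlarging once more so that $\operatorname{range}(V)^{\perp}$ is also infinite-dimensional, both complements become separable and infinite-dimensional, so a unitary identification exists.

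Finally, I would verify the dilation formula. Expanding $V$ in an orthonormal basis $(|i\rangle)_{i\in I}$ of $\mathcal{K}$ as $Vx=\sum_{i}V_ix\otimes|i\rangle$ with Kraus operators $V_i\in\mathcal{B}(\mathcal{H})$ satisfying $\sum_iV_i^\dagger V_i=\operatorname{id}_{\mathcal{H}}$, a direct computation on a rank-one $A=|y\rangle\langle z|$ yields
\[
U(A\otimes\omega)U^\dagger = |Vy\rangle\langle Vz| = \sum_{i,j}|V_iy\rangle\langle V_jz|\otimes|i\rangle\langle j|,
\]
whose partial trace over $\mathcal{K}$ collapses to $\sum_iV_i|y\rangle\langle z|V_i^\dagger=T(A)$. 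Since both sides of the claimed identity are bounded linear maps on $\mathcal{B}^1(\mathcal{H})$ and the finite linear combinations of rank-one operators are dense in trace norm, the identity extends to all $A\in\mathcal{B}^1(\mathcal{H})$. The principal obstacle I expect is precisely the dimension-matching step when extending $W$ to $U$; once the Stinespring/Kraus representation is in hand, the remainder is essentially bookkeeping.
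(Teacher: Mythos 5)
Your proposal is correct and follows essentially the same route as the paper's source for this statement: the paper does not prove Theorem \ref{thm1} itself but defers to Kraus, and the standard argument there is exactly your construction---an isometric Stinespring/Kraus dilation $T(A)=\operatorname{tr}_{\mathcal K}(VAV^\dagger)$ with a pure ancilla state, followed by extending $x\otimes|0\rangle\mapsto Vx$ to a unitary after enlarging $\mathcal K$ so that both orthogonal complements are infinite-dimensional separable, plus the rank-one/density verification. The only wording to adjust is ``extend $V$ by zero on the added summand'': one simply composes $V$ with the isometric inclusion $\mathcal H\otimes\mathcal K\hookrightarrow\mathcal H\otimes(\mathcal K\oplus\ell_2(\mathbb N))$, which leaves both $T^\dagger(B)=V^\dagger(B\otimes\operatorname{id})V$ and the partial-trace formula unchanged.
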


\noindent
Here $\operatorname{tr}_{\mathcal K}:\mathcal B^1(\mathcal H\otimes\mathcal K)\to\mathcal B^1(\mathcal H)$
is the partial trace \textit{with respect to the Hilbert space} $\mathcal K$ which is defined via
\begin{align}\label{eq:partial_trace_1}
\operatorname{tr}(B\operatorname{tr}_{\mathcal K}(A))=\operatorname{tr}((B\otimes\operatorname{id}_{\mathcal K})A)
\end{align}
for all $B\in\mathcal B(\mathcal H)$ and all $A\in\mathcal B^1(\mathcal H\otimes\mathcal K)$.

For a complete proof of Theorem \ref{thm1}, see \cite[second part of Thm.2]{Kraus}. Here, we only
emphasize that the separable auxiliary space $\mathcal K$ can be chosen independently of $T$; for 
instance, $\mathcal K := \ell_2(\mathbb N)$ constitutes such a universal auxiliary space. Moreover,
once $\mathcal K$ is fixed, $\omega\in\mathbb D(\mathcal K)$ can be chosen as any orthogonal rank-$1$
projection. Thus $\omega$ is pure and independent of $T$, too.

\begin{coro}[General Stinespring Dilation]\label{coro_gen_stinespring}
For every $T\in Q_S(\mathcal H,\mathcal G)$ there exists a separable Hilbert space $\mathcal K$, pure $\omega_G\in\mathbb D(\mathcal G)$, $\omega_K\in\mathbb D(\mathcal K)$ and a unitary 
$U\in\mathcal B(\mathcal H\otimes\mathcal G\otimes\mathcal K)$ such that
\begin{align*}
T(A)=(\operatorname{tr}_{\mathcal H}\circ\operatorname{tr}_{\mathcal K})(U(A\otimes \omega_G\otimes\omega_K)U^\dagger)
\end{align*}
for all $A\in\mathcal B^1(\mathcal H)$.
\end{coro}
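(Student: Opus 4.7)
My plan is to reduce the unequal-space case to Theorem \ref{thm1} by constructing an auxiliary Schrödinger channel on the enlarged space $\mathcal H\otimes\mathcal G$, whose Stinespring dilation encodes $T$.

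\smallskip

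First I would pick any pure state $\omega_H\in\mathbb D(\mathcal H)$ (for example, a rank-one projection onto a unit vector) and define
\begin{equation*}
\widetilde{T}:\mathcal B^1(\mathcal H\otimes\mathcal G)\to\mathcal B^1(\mathcal H\otimes\mathcal G),\qquad
\widetilde{T}(B):=\omega_H\otimes T\bigl(\operatorname{tr}_{\mathcal G}(B)\bigr).
\end{equation*}
This is manifestly a composition of three Schrödinger quantum channels: the partial trace $\operatorname{tr}_{\mathcal G}\in Q_S(\mathcal H\otimes\mathcal G,\mathcal H)$, the given $T\in Q_S(\mathcal H,\mathcal G)$, and the tensor embedding $X\mapsto \omega_H\otimes X$ in $Q_S(\mathcal G,\mathcal H\otimes\mathcal G)$. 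Each is linear, completely positive, and trace-preserving, so $\widetilde{T}\in Q_S(\mathcal H\otimes\mathcal G)$.

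\smallskip

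Next I apply Theorem \ref{thm1} to $\widetilde{T}$ on the separable Hilbert space $\mathcal H\otimes\mathcal G$, obtaining a separable auxiliary space $\mathcal K$, a pure state $\omega_K\in\mathbb D(\mathcal K)$, and a unitary $U\in\mathcal B(\mathcal H\otimes\mathcal G\otimes\mathcal K)$ with
\begin{equation*}
\widetilde{T}(C)=\operatorname{tr}_{\mathcal K}\bigl(U(C\otimes\omega_K)U^\dagger\bigr)\qquad\text{for all }C\in\mathcal B^1(\mathcal H\otimes\mathcal G).
\end{equation*}
Specializing to $C=A\otimes\omega_G$ for an arbitrary $A\in\mathcal B^1(\mathcal H)$ and the given pure state $\omega_G\in\mathbb D(\mathcal G)$, and using $\operatorname{tr}_{\mathcal G}(A\otimes\omega_G)=\operatorname{tr}(\omega_G)\,A=A$, I get
\begin{equation*}
\omega_H\otimes T(A)=\widetilde{T}(A\otimes\omega_G)=\operatorname{tr}_{\mathcal K}\bigl(U(A\otimes\omega_G\otimes\omega_K)U^\dagger\bigr).
\end{equation*}

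\smallskip

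Finally, I apply $\operatorname{tr}_{\mathcal H}$ to both sides. On the left, $\operatorname{tr}_{\mathcal H}(\omega_H\otimes T(A))=\operatorname{tr}(\omega_H)\,T(A)=T(A)$ because $\omega_H$ is a state; on the right, the partial traces with respect to $\mathcal H$ and $\mathcal K$ commute since they act on disjoint tensor factors, yielding the desired identity
\begin{equation*}
T(A)=(\operatorname{tr}_{\mathcal H}\circ\operatorname{tr}_{\mathcal K})\bigl(U(A\otimes\omega_G\otimes\omega_K)U^\dagger\bigr).
\end{equation*}
The only nontrivial step is the construction of $\widetilde T$ and the verification that it lies in $Q_S(\mathcal H\otimes\mathcal G)$; everything else is bookkeeping with tensor factors and the defining property \eqref{eq:partial_trace_1} of the partial trace. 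The auxiliary purification $\omega_H$ plays no role in the final formula, which is why its purity is irrelevant.
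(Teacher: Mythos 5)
Your proposal is correct and follows essentially the same route as the paper: the channel $\widetilde T(B)=\omega_H\otimes T(\operatorname{tr}_{\mathcal G}(B))$ is exactly the auxiliary channel $X$ used in the paper's proof, to which Theorem \ref{thm1} is applied before specializing to $A\otimes\omega_G$ and tracing out $\mathcal H$ and $\mathcal K$. Your closing remark that the purity of $\omega_H$ is irrelevant to the final formula is a fair (and accurate) observation, but otherwise the two arguments coincide.
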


\begin{proof}
Consider arbitrary $\omega_G\in\mathbb D(\mathcal G)$ and $\omega_H\in\mathbb D(\mathcal H)$ of rank one. 
Applying Theorem \ref{thm1} to 
$X(\cdot):=\omega_H\otimes T(\operatorname{tr}_{\mathcal G}(\cdot))\in Q_S(\mathcal H\otimes\mathcal G)$ 
which is obviously a composition of Schr\"odinger quantum channels 
yields a separable Hilbert space $\mathcal K$, a pure $\omega_K\in\mathbb D(\mathcal K)$ and a unitary 
$U\in\mathcal B(\mathcal H\otimes\mathcal G\otimes\mathcal K)$ such that $X$ is of form
\eqref{eq:T_stinespring}. For any $A\in\mathcal B^1(\mathcal H)$ one gets
\begin{align*}
T(A)=\operatorname{tr}_{\mathcal H}(X(A\otimes\omega_G))=(\operatorname{tr}_{\mathcal H}\circ\operatorname{tr}_{\mathcal K})(U(A\otimes \omega_G\otimes\omega_K)U^\dagger)
\end{align*}
with $\omega_G\otimes\omega_K\in\mathbb D(\mathcal G\otimes\mathcal K)$ rank one.
\end{proof}

The last result of this subsection provides a characterization of invertible quantum channels which
leads to a nice simplification later on (cf.~Remark \ref{ch_4_bem_3}). For finite dimensions, this was 
essentially shown in \cite[Coro.~3]{wolf08_1}.

\begin{prop}\label{ch_3_Theorem_14}
Let $T\in Q_S(\mathcal H)$ be bijective. Then the following statements are equivalent.
\begin{itemize}
\item[(a)] $T^{-1}$ is positive.
\item[(b)] There exists unitary $U\in\mathcal B(\mathcal H)$ such that $T(A)=U AU^\dagger$ for all
$A\in\mathcal B^1(\mathcal H)$.
\end{itemize}
In particular if one (and thus both) conditions are fulfilled, then $T\in Q_S(\mathcal H)$ is invertible
as a channel, i.e.~$T$ is bijective and $T^{-1}\in Q_S(\mathcal H)$.
\end{prop}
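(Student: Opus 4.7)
The direction (b) $\Rightarrow$ (a) is immediate: if $T(A)=UAU^\dagger$ with $U$ unitary, then $T^{-1}(A)=U^\dagger A U$ is itself a Schr\"odinger channel, hence positive. This observation also settles the ``in particular'' claim on invertibility as a channel, so the real work lies in (a) $\Rightarrow$ (b), which I would split into three stages.

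First, I would argue that $T^{-1}$ is automatically trace-preserving: for any $A\in\mathcal B^1(\mathcal H)$,
\begin{equation*}
\operatorname{tr}(T^{-1}A)=\operatorname{tr}(T(T^{-1}A))=\operatorname{tr}(A).
\end{equation*}
Combined with (a), $T^{-1}$ maps states to states. From this I would deduce that $T$ preserves purity: if a pure state $\rho$ satisfied $T\rho = \lambda\sigma_1+(1-\lambda)\sigma_2$ for some $\lambda\in(0,1)$ and distinct states $\sigma_1\ne\sigma_2$, then applying the linear bijection $T^{-1}$ would express $\rho$ as a nontrivial convex combination of the distinct states $T^{-1}\sigma_1$, $T^{-1}\sigma_2$, contradicting extremality of pure states in $\mathbb D(\mathcal H)$.

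Second, I would invoke Theorem~\ref{thm1} to write $T(A)=\sum_k K_kAK_k^\dagger$ and exploit purity preservation: for every unit vector $\psi\in\mathcal H$, $T(|\psi\rangle\langle\psi|)=\sum_k|K_k\psi\rangle\langle K_k\psi|$ is rank one, forcing all $K_k\psi$ to lie on a single complex line. After reordering so that $K_1\ne 0$, I would pick $\psi,\psi'$ with $K_1\psi,K_1\psi'$ linearly independent, apply the collinearity condition to $\psi+\psi'$, and extract scalars $c_k$ \emph{independent of $\psi$} satisfying $K_k\psi=c_kK_1\psi$ on a dense subset of $\mathcal H$; boundedness of the Kraus operators then promotes this to the operator identity $K_k=c_kK_1$. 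A side argument rules out the rank-one case $\operatorname{rank}K_1=1$ (an analogous analysis would force each $K_k$ to take values in a common line $\mathbb C\phi$, so that $T(\cdot)=\operatorname{tr}(\cdot)\,|\phi\rangle\langle\phi|$ would have one-dimensional range, contradicting bijectivity), and vectors in $\ker K_1$ are handled by intersecting the lines determined by two reference vectors $\psi_0,\psi_0'$ with $K_1\psi_0,K_1\psi_0'$ linearly independent.

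Third, I would collapse the Kraus sum to $T(A)=\bigl(\sum_k|c_k|^2\bigr)K_1AK_1^\dagger$ and set $U:=\bigl(\sum_k|c_k|^2\bigr)^{1/2}K_1$. The trace-preservation identity $\sum_kK_k^\dagger K_k=\operatorname{id}_{\mathcal H}$ reduces to $U^\dagger U=\operatorname{id}_{\mathcal H}$, so $U$ is an isometry; bijectivity of $T$ on $\mathcal B^1(\mathcal H)$ then forces $U\mathcal H=\mathcal H$, making $U$ unitary. The principal obstacle will be the second stage: turning the easy pointwise collinearity of $\{K_k\psi\}_k$ into a single global operator identity, which requires careful case distinctions (linearly independent versus collinear $K_1\psi,K_1\psi'$, and the degenerate cases $K_1\psi=0$ and $\operatorname{rank}K_1=1$) alongside the density/boundedness extension argument that is peculiar to the infinite-dimensional setting.
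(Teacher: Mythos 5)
Your argument is correct, but it takes a genuinely different route from the paper. The paper restricts $T$ to $\mathbb D(\mathcal H)$, uses positivity and trace-preservation of $T^{-1}$ (the latter exactly as you derive it) to conclude that $T|_{\mathbb D}$ is a convex-linear bijection of $\mathbb D(\mathcal H)$ onto itself, and then invokes the Wigner--Kadison-type classification of state automorphisms (Davies, Cor.~3.2; Heinosaari, Thm.~2.63) to get $T|_{\mathbb D}=U(\cdot)U^\dagger$ with $U$ unitary or anti-unitary; the anti-unitary case is excluded by complete positivity, and the representation is extended linearly since $\operatorname{span}_{\mathbb C}\mathbb D(\mathcal H)=\mathcal B^1(\mathcal H)$. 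You instead stop at the weaker consequence that $T$ maps extreme points of $\mathbb D(\mathcal H)$ (i.e.\ rank-one projections) to extreme points, pass to an operator-sum form $T(A)=\sum_k K_kAK_k^\dagger$ (which follows from Theorem~\ref{thm1} by expanding the partial trace in an orthonormal basis of the separable $\mathcal K$ -- worth a sentence, since the theorem is stated in Stinespring form), and run the collinearity argument to force $K_k=c_kK_1$, using bijectivity twice: once to exclude the degenerate channel $A\mapsto\operatorname{tr}(A)\,|\phi\rangle\langle\phi|$ and once to upgrade the isometry $U$ to a unitary. What your route buys is self-containedness: it avoids the deep state-automorphism theorem and the anti-unitary exclusion altogether, at the price of the case-heavy collinearity analysis (kernel vectors, collinear images, the rank-one case) that you correctly flag as the main labor, and which does go through in infinite dimensions since a trace-norm convergent sum $\sum_k|v_k\rangle\langle v_k|$ of rank one forces all $v_k$ into one line. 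The only points you should make explicit when writing it up are the standard identification of extreme points of $\mathbb D(\mathcal H)$ with rank-one projections, and that $\sum_k\|K_k\psi\|^2=1$ for unit $\psi$ (trace preservation) is what gives both $\sum_k|c_k|^2<\infty$ and $U^\dagger U=\operatorname{id}_{\mathcal H}$; the ``in particular'' claim then follows from (b) exactly as you say.
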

\begin{proof}
(b)$\,\Rightarrow\,$(a): $\checkmark$ (a)$\,\Rightarrow\,$(b): The proof idea is the same as in 
\cite[Prop.~4.31]{Heinosaari}. Consider the restricted channel 
$T|_{\mathbb D}:\mathbb D(\mathcal H)\to\mathcal B^1(\mathcal H)$ which by assumption is 
convex-linear and injective. As $T$ and $T^{-1}$ are linear, trace-preserving and,
by assumption, positive, the restricted channel satisfies
$$
T(\mathbb D(\mathcal H))\subseteq \mathbb D(\mathcal H) 
\qquad 
T^{-1}(\mathbb D(\mathcal H))\subseteq \mathbb D(\mathcal H)\,,
$$
so $T|_{\mathbb D}:\mathbb D(\mathcal H)\to \mathbb D(\mathcal H)$ is surjective and thus a state
automorphism, i.e.~convex-linear and bijective. 
Then Corollary 3.2 in \cite{Davies} or, more explicitely, Theorem 2.63 in \cite{Heinosaari} 
imply the existence of unitary or anti-unitary $U$ 
such that $T|_{\mathbb D}(\cdot)=U(\cdot)U^\dagger$. If $U$ were anti-unitary, then $T$ would not be
completely positive\cite[Prop.~4.14]{Heinosaari} hence $U$ has to be unitary. Due to 
$\operatorname{span}_{\mathbb C}(\mathbb D(\mathcal H))=\mathcal B^1(\mathcal H)$, this representation 
extends linearily to all of $\mathcal B^1(\mathcal H)$ which concludes the proof.
\end{proof}

\subsection{Dual Channels}\label{sec:dualchannel}
It is well known\cite[Prop.16.26]{MeiseVogt} that the dual space of $\mathcal B^1(\mathcal H)$ is 
isometrically isomorphic to $\mathcal B(\mathcal H)$ by means of the map
$\psi_{\mathcal H}:\mathcal B(\mathcal H)\to (\mathcal B^1(\mathcal H))'$, $B\mapsto\psi_{\mathcal H}(B)$ with
\begin{align*}
(\psi_{\mathcal H}(B))(A):=\operatorname{tr}(BA)
\end{align*}
for all $A\in\mathcal B^1(\mathcal H)$. Note that the weak-$*$-topology and the ultraweak topology on
$\mathcal B(\mathcal H)$ coincide under the above identification
$(\mathcal B^1(\mathcal H))' \cong \mathcal B(\mathcal H)$, cf.~\cite[Section 1.6]{Davies}.

Now, since every positive linear map $T:\mathcal B^1(\mathcal H)\to\mathcal B^1(\mathcal G)$ is bounded 
(cf.~Remark \ref{rem:boundedness}.1) the dual map
\begin{align*}
T':(\mathcal B^1(\mathcal G))'\to(\mathcal B^1(\mathcal H))'\qquad X\mapsto T'(X):=X\circ T
\end{align*}
is well defined and this allows us to construct the so called \emph{dual channel} of $T$
\begin{align*}
T^*:\mathcal B(\mathcal G)\to\mathcal B(\mathcal H)\qquad B\mapsto T^*(B):=(\psi_{\mathcal H}^{-1}\circ T' \circ\psi_{\mathcal G})(B)
\end{align*}
which then satisfies
\begin{align}\label{eq:dual_channel}
\operatorname{tr}(BT(A))=\operatorname{tr}(T^*(B)A)
\end{align}
for all $B\in\mathcal B(\mathcal G)$ and $A\in\mathcal B^1(\mathcal H)$. Alternatively, onc can use 
(\ref{eq:dual_channel}) as defining equation for $T^*$. Furthermore, one has $\Vert T\Vert=\Vert T^*\Vert$
by definition of $T^*$, because $T$ and $T'$ have the same operator norm 
and $\psi_{\mathcal G}$ and $\psi_{\mathcal H}$ are isometric isomorphisms. Some basic properties of $T^*$ are:
\begin{itemize}
\item[(a)] $T^*$ is positive and ultraweakly continuous.
\item[(b)] $T^*$ is completely positive if and only if $T$ is completely positive.
\item[(c)] $T^*$ is unital (i.e.~$T^*(\operatorname{id}_{\mathcal G})=\operatorname{id}_{\mathcal H}$) if and only if $T$ is trace-preserving.
\end{itemize}
For more details and proofs we refer to \cite[p. 35]{Kraus} or \cite[Ch.4.1.2]{Heinosaari}.

\begin{defi}\label{defi_heisenberg_qc}
A Heisenberg quantum channel is a linear, ultraweakly continuous\cite{Note2a}, completely 
positive and unital map $S:\mathcal B(\mathcal G)\to\mathcal B(\mathcal H)$. Furthermore, we define
\begin{align*}
Q_H(\mathcal G,\mathcal H):=\lbrace S:\mathcal B(\mathcal G)\to\mathcal B(\mathcal H)\, |\, S\text{ is Heisenberg quantum channel}\,\rbrace
\end{align*}
and $Q_H (\mathcal H):= Q_H (\mathcal H,\mathcal H)$.
\end{defi}

By the properties listed above, it is evident that the map 
$*:Q_S(\mathcal H,\mathcal G)\to Q_H(\mathcal G,\mathcal H)$
which to any quantum channel assigns its dual channel is well-defined. Furthermore, it is---as we will 
see next---bijective.

\begin{theorem}\label{thm_dual}
\begin{itemize}
\item[(a)] For every $S:\mathcal B(\mathcal G)\to\mathcal B(\mathcal H)$ linear, ultraweakly continuous and positive there exists unique $T:\mathcal B^1(\mathcal H)\to\mathcal B^1(\mathcal G)$ linear and positive such that $T^*=S$.
\item[(b)] For every $S\in Q_H(\mathcal G,\mathcal H)$ there exists unique $T\in Q_S(\mathcal H,\mathcal G)$ such that $T^*=S$.
\end{itemize}
\end{theorem}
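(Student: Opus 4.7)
The plan is to use the duality between $\mathcal B^1(\mathcal G)$ and $\mathcal B(\mathcal G)$ via the trace pairing, exploiting that ultraweak continuity is precisely weak-$*$ continuity under the identification $\psi_{\mathcal G}$. So for part (a), I would first fix $A \in \mathcal B^1(\mathcal H)$ and consider the linear functional $\varphi_A : \mathcal B(\mathcal G) \to \mathbb C$ defined by $\varphi_A(B) := \operatorname{tr}(S(B)A)$. Since $S$ is ultraweakly continuous and the functional $X \mapsto \operatorname{tr}(XA)$ on $\mathcal B(\mathcal H)$ is ultraweakly continuous (as it corresponds to $A \in \mathcal B^1(\mathcal H)$ under the predual identification), $\varphi_A$ is ultraweakly continuous on $\mathcal B(\mathcal G)$. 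By the standard fact that ultraweakly continuous functionals on $\mathcal B(\mathcal G)$ are given by trace-class operators, there is a \emph{unique} $T(A) \in \mathcal B^1(\mathcal G)$ such that $\varphi_A(B) = \operatorname{tr}(B\,T(A))$ for all $B\in\mathcal B(\mathcal G)$.

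Next I would check that the assignment $A \mapsto T(A)$ is linear (immediate from the uniqueness and linearity of $\varphi_A$ in $A$) and that $T^* = S$ (this is just the defining identity \eqref{eq:dual_channel} read in reverse). Uniqueness of $T$ follows because the trace pairing between $\mathcal B(\mathcal G)$ and $\mathcal B^1(\mathcal G)$ is non-degenerate: if two candidates agreed under $^*$, then $\operatorname{tr}(B(T_1(A)-T_2(A)))=0$ for all $B\in\mathcal B(\mathcal G)$, forcing $T_1(A)=T_2(A)$. For positivity, given $A\ge 0$, for every rank-one projection $B=|x\rangle\langle x|$ one has $S(B)\ge 0$, hence $\langle x, T(A)x\rangle = \operatorname{tr}(B\,T(A)) = \operatorname{tr}(S(B)A)\ge 0$, giving $T(A)\ge 0$.

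For part (b) I would apply part (a) to get a unique positive linear $T:\mathcal B^1(\mathcal H)\to\mathcal B^1(\mathcal G)$ with $T^*=S$, and then verify the two extra properties. Trace-preservation is a one-line computation using unitality of $S$:
\begin{align*}
\operatorname{tr}(T(A)) \;=\; \operatorname{tr}(\operatorname{id}_{\mathcal G}\,T(A)) \;=\; \operatorname{tr}(S(\operatorname{id}_{\mathcal G})\,A) \;=\; \operatorname{tr}(\operatorname{id}_{\mathcal H}\,A) \;=\; \operatorname{tr}(A)\,.
\end{align*}
For complete positivity, apply part (a) to $S\otimes\operatorname{id}_n$ on $\mathcal B(\mathcal G\otimes\mathbb C^n)$, which remains ultraweakly continuous and positive because $S$ is CP; by uniqueness the resulting predual map coincides with $T\otimes\operatorname{id}_n$, which is therefore positive. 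This gives $T\in Q_S(\mathcal H,\mathcal G)$ and uniqueness is inherited from (a).

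The main obstacle—really the only non-bookkeeping step—is the identification of ultraweakly continuous functionals on $\mathcal B(\mathcal G)$ with elements of $\mathcal B^1(\mathcal G)$, i.e.\ the fact that $\mathcal B^1(\mathcal G)$ is the predual of $\mathcal B(\mathcal G)$ and that ultraweak $=$ weak-$*$ under $\psi_{\mathcal G}$. This is already invoked in Section~\ref{sec:dualchannel} via the reference to \cite[Section~1.6]{Davies}, so once that is in hand the proof of Theorem~\ref{thm_dual} is essentially a diagram chase through the trace pairing.
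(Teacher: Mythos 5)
Your proposal is correct and takes essentially the same route as the paper: both construct $T(A)$ from the ultraweakly continuous functional $B\mapsto\operatorname{tr}(S(B)A)$ via the identification of such functionals with trace-class operators, obtain $T^*=S$ from the trace pairing, and settle part (b) by the standard duality between trace-preservation/unitality and complete positivity of a map and its dual. The only cosmetic difference is that the paper uses the positive-functional version of the representation theorem, defines $\hat T$ on the cone of positive semi-definite trace-class operators and extends linearly, whereas you invoke the full predual identification at once and recover positivity afterwards by testing against rank-one projections.
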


\begin{proof}
(a) By the above construction of the dual channel it is obvious that the map $*$ is one-to-one. Therefore,
it suffices to show its surjectivity.

First one shows, similar to  \cite[Ch.~1, Lemma~6.1]{Davies}, that for every positive, linear and ultraweakly continuous functional 
$\lambda:\mathcal B(\mathcal G)\to\mathbb C$ 
there exists a unique positive semi-definite $\rho\in\mathcal B^1(\mathcal G)$ such that 
$\lambda(\cdot)=\operatorname{tr}(\rho(\cdot))$. Next choose arbitrary positive semi-definite $A \in \mathcal B^1(\mathcal H)$ and consider the linear functional 
$$
B \mapsto \operatorname{tr}(S(B)A) 
$$
which by assumption on $S$ is ultraweakly continuous. Our preliminary consideration yields a unique positive semi-definite
$\rho_A\in\mathcal B^1(\mathcal G)$ such that $\operatorname{tr}(S(B)A)=\operatorname{tr}(B\rho_A)$
for all $B\in\mathcal B(\mathcal G)$. This allows to define an $\mathbb R^+$-linear map $\hat T$ on the
positive semi-definite elements of $\mathcal B^1(\mathcal H)$ via $\hat T(A):=\rho_A$. Finally, $\hat T(A)$
can be uniquely extended to a positive, linear map $T:\mathcal B^1(\mathcal H)\to\mathcal B^1(\mathcal G)$
satisfying $T^*=S$.

Now (b) follows from (a) together with the above connections between properties of a positive, linear map and its dual channel.
\end{proof}

\begin{remark}
\begin{enumerate}
\item
Note that, again, boundedness is not required in the definition of a Heisenberg quantum 
channel because, similar to Schr\"odunger quantum channels, they are automatically bounded, 
see Proposition \ref{thm_q_norm_1} below.
\item
In finite dimensions, ultraweak continuity is of course always satisfied and the $*$-map is an 
involution as the sets of trace-class operators and bounded operators coincide.
\end{enumerate}
\end{remark}

\begin{prop}\label{thm_q_norm_1}
Let $T\in Q_S(\mathcal H,\mathcal G)$ and $S\in Q_H(\mathcal G,\mathcal H)$. Then $\|T\|=1$ and $\|S\|=1$.
\end{prop}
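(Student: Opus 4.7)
The strategy is to first prove $\|S\|=1$ for every Heisenberg channel $S\in Q_H(\mathcal G,\mathcal H)$ and then to derive $\|T\|=1$ for $T\in Q_S(\mathcal H,\mathcal G)$ via duality. Indeed, the bulleted properties (a)--(c) following equation \eqref{eq:dual_channel} (together with the ultraweak continuity of $T^*$ noted there) ensure that $T^*\in Q_H(\mathcal G,\mathcal H)$ whenever $T\in Q_S(\mathcal H,\mathcal G)$; combined with the identity $\|T\|=\|T^*\|$ stated immediately after the definition of the dual channel, this reduces $\|T\|=1$ to $\|S\|=1$. (As an independent sanity check for the lower bound on $\|T\|$: for any state $\rho\in\mathbb D(\mathcal H)$ one has $\nu_1(T(\rho))=\operatorname{tr}(T(\rho))=\operatorname{tr}(\rho)=1=\nu_1(\rho)$, since $T(\rho)\geq 0$.)

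For $S\in Q_H(\mathcal G,\mathcal H)$, the lower bound $\|S\|\geq 1$ is immediate from unitality: $\|S\|\geq\|S(\operatorname{id}_{\mathcal G})\|=\|\operatorname{id}_{\mathcal H}\|=1$. For the upper bound I would first treat self-adjoint $B\in\mathcal B(\mathcal G)$. The operator inequality $-\|B\|\operatorname{id}_{\mathcal G}\leq B\leq\|B\|\operatorname{id}_{\mathcal G}$, combined with positivity and unitality of $S$, gives $-\|B\|\operatorname{id}_{\mathcal H}\leq S(B)\leq\|B\|\operatorname{id}_{\mathcal H}$---and here one uses that $S(B)$ is self-adjoint, being the image of a self-adjoint element under a positive map---so $\|S(B)\|\leq\|B\|$. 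For arbitrary $B\in\mathcal B(\mathcal G)$ I would invoke the Kadison--Schwarz inequality $S(B^*B)\geq S(B)^*S(B)$, which is available because $S$ is completely positive (hence $2$-positive) and unital. Applying the C$^*$-identity twice together with the self-adjoint bound applied to the positive element $B^*B$ yields
\begin{equation*}
\|S(B)\|^2=\|S(B)^*S(B)\|\leq\|S(B^*B)\|\leq\|B^*B\|=\|B\|^2,
\end{equation*}
so $\|S\|\leq 1$ and hence $\|S\|=1$.

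The main subtle point is precisely this extension from self-adjoint $B$ to arbitrary $B$: the naive Cartesian decomposition $B=\tfrac{1}{2}(B+B^*)+\tfrac{1}{2i}(B-B^*)$ only delivers $\|S\|\leq 2$, so some form of the Kadison--Schwarz inequality is genuinely required. A self-contained alternative that avoids invoking Kadison--Schwarz as a black box is to exploit structure already available in the paper: from the Kraus/Stinespring representation of the (unique) $T\in Q_S$ with $T^*=S$ provided by Theorem \ref{thm1} (respectively Corollary \ref{coro_gen_stinespring}), a direct dualisation produces an isometry $V$ such that $S(B)=V^\dagger(B\otimes\operatorname{id}_{\mathcal K})V$, and then $\|S(B)\|\leq\|V\|^2\|B\|=\|B\|$ is immediate. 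Either route yields $\|S\|=1$, and the duality argument outlined in the first paragraph then finishes the proof of $\|T\|=1$.
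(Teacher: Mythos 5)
Your proof is correct, and its overall skeleton matches the paper's: establish $\|S\|=1$ for unital (completely) positive maps and then transfer to $T$ via $\|T\|=\|T^*\|$. The difference lies in how the key bound $\|S\|\leq 1$ is obtained. The paper simply cites the Russo--Dye theorem (Cor.~1 of Ref.~\onlinecite{russo1966}, cf.~Remark~\ref{rem_19}.1), which yields contractivity for \emph{merely positive} unital maps, so complete positivity is never used there; you instead give a self-contained argument whose self-adjoint step uses only positivity and unitality, but whose extension to arbitrary $B$ invokes the Kadison--Schwarz inequality $S(B)^*S(B)\leq S(B^*B)$ and hence genuinely needs $2$-positivity. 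You correctly identify why this extension is the delicate point (the Cartesian decomposition only gives $\|S\|\leq 2$); Russo--Dye resolves exactly this issue without any positivity beyond order preservation, which is what the citation buys the paper, whereas your route buys independence from that external theorem at the cost of using the stronger hypothesis that is anyway available here. Your alternative via dualizing the Kraus/Stinespring representation to write $S(B)=V^\dagger(B\otimes\operatorname{id}_{\mathcal K})V$ with $V$ an isometry is precisely the alternative proof the paper sketches in the sentence following the proposition, so that part coincides with the paper. The remaining steps (lower bound from unitality, $T^*\in Q_H(\mathcal G,\mathcal H)$, and $\|T\|=\|T^*\|$) are all sound and are exactly how the paper concludes.
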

\begin{proof}
As each $S\in Q_H(\mathcal G,\mathcal H) $ in particular is linear, positive and unital it has operator norm  
$\|S\|=1$ as a consequence of the Russo-Dye Theorem, cf.~\cite[Cor.~1]{russo1966} or 
Rem.~\ref{rem_19}.1. This directly implies  $\|T\|=\|T^*\|=1$.
\end{proof}

\noindent Alternatively, one can prove Proposition \ref{thm_q_norm_1} via the general Stinespring dilation 
(Corollary \ref{coro_gen_stinespring}) because all maps involved in the Stinespring representation have 
operator norm one. Either way, with this one readily verifies that $Q_H(\mathcal H)$ forms a convex subsemigroup
of the Banach space $\mathcal B(\mathcal B(\mathcal H))$ with unity element 
$\operatorname{id}_{\mathcal B(\mathcal H)}$.\medskip

The partial trace $\operatorname{tr}_{\omega}:\mathcal B(\mathcal H\otimes\mathcal K)\to\mathcal B(\mathcal H)$ 
\textit{with respect to a state} $\omega\in\mathbb D(\mathcal K)$ is defined via 
\begin{align}\label{eq:partial_trace_2}
\operatorname{tr}(\operatorname{tr}_{\omega}(B)A)=\operatorname{tr}(B(A\otimes\omega))
\end{align}
for all $B\in\mathcal B(\mathcal H\otimes\mathcal K)$, $A\in\mathcal B^1(\mathcal H)$, 
cf.~\cite[Ch.~9, Lemma~1.1]{Davies}. Be aware that the map $\operatorname{tr}_{\mathcal K}$ from 
(\ref{eq:partial_trace_1}) and the extension 
$$
i_\omega:\mathcal B^1(\mathcal H)\to \mathcal B^1(\mathcal H\otimes\mathcal K)\qquad A\mapsto A\otimes\omega
$$
with some state 
$\omega\in\mathbb D(\mathcal K)$ are Schr\"odinger quantum channels 
so we immediatly get their dual channels $i_{\omega}^*=\operatorname{tr}_{\omega}$ and 
$\operatorname{tr}_{\mathcal K}^*=i_{\mathcal K}$ with
\begin{align*}
i_{\mathcal K}:\mathcal B(\mathcal H)\to\mathcal B(\mathcal H\otimes\mathcal K)
\qquad B\mapsto B\otimes\operatorname{id}_{\mathcal K}\,.
\end{align*}

\noindent
This leads to the following result.
\begin{coro}\label{coro_1}
For every $S\in Q_H(\mathcal G,\mathcal H)$ there exists a separable Hilbert space $\mathcal K$, 
pure states $\omega_G\in\mathbb D(\mathcal G)$ and $\omega_K\in\mathbb D(\mathcal K)$ and a unitary 
$U\in\mathcal B(\mathcal H\otimes\mathcal G\otimes\mathcal K)$ such that
\begin{align*}
S(B)=(\operatorname{tr}_{\omega_G}\circ\operatorname{tr}_{\omega_K})(U^\dagger(\operatorname{id}_{\mathcal H}\otimes B\otimes \operatorname{id}_{\mathcal K})U)
\end{align*}
for all $B\in\mathcal B(\mathcal G)$. For $\mathcal G=\mathcal H$ this reduces to
\begin{align}\label{eq:stinespring_q_h}
S(B)=\operatorname{tr}_{\omega_K}(U^\dagger(B\otimes \operatorname{id}_{\mathcal K})U)
\end{align}
for all $B\in\mathcal B(\mathcal H)$ where the unitary operator $U$ now acts on $\mathcal H\otimes\mathcal K$.
\end{coro}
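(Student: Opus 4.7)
The plan is to reduce the statement to the already established Schr\"odinger picture dilation (Corollary \ref{coro_gen_stinespring}) by taking duals, using the bijection from Theorem \ref{thm_dual} together with the duality identities $i_\omega^* = \operatorname{tr}_\omega$ and $\operatorname{tr}_{\mathcal K}^* = i_{\mathcal K}$ spelled out just before the corollary.

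First I would invoke Theorem \ref{thm_dual}(b) to pick the unique $T \in Q_S(\mathcal H,\mathcal G)$ with $T^* = S$, then apply Corollary \ref{coro_gen_stinespring} to $T$ to produce a separable Hilbert space $\mathcal K$, pure states $\omega_G \in \mathbb D(\mathcal G)$, $\omega_K \in \mathbb D(\mathcal K)$, and a unitary $U \in \mathcal B(\mathcal H \otimes \mathcal G \otimes \mathcal K)$ with
\begin{align*}
T = \operatorname{tr}_{\mathcal H} \circ \operatorname{tr}_{\mathcal K} \circ \operatorname{Ad}_U \circ i_{\omega_G \otimes \omega_K},
\end{align*}
where $\operatorname{Ad}_U(X) := U X U^\dagger$ and $i_{\omega_G \otimes \omega_K}(A) := A \otimes \omega_G \otimes \omega_K$. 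Each factor is a Schr\"odinger quantum channel, so each has a well-defined dual.

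Next I would dualise this composition using $(F \circ G)^* = G^* \circ F^*$ together with the individual duals: $(i_{\omega_G \otimes \omega_K})^* = \operatorname{tr}_{\omega_G \otimes \omega_K}$ (a direct calculation from \eqref{eq:partial_trace_2} shows $\operatorname{tr}_{\omega_G \otimes \omega_K} = \operatorname{tr}_{\omega_G} \circ \operatorname{tr}_{\omega_K}$), $(\operatorname{Ad}_U)^* = \operatorname{Ad}_{U^\dagger}$ (immediate from cyclicity of the trace), and $\operatorname{tr}_{\mathcal K}^* = i_{\mathcal K}$, $\operatorname{tr}_{\mathcal H}^* = i_{\mathcal H}$ as noted before the corollary. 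Chaining these gives
\begin{align*}
S(B) = T^*(B) = (\operatorname{tr}_{\omega_G} \circ \operatorname{tr}_{\omega_K})\bigl( U^\dagger (\operatorname{id}_{\mathcal H} \otimes B \otimes \operatorname{id}_{\mathcal K}) U \bigr)
\end{align*}
for every $B \in \mathcal B(\mathcal G)$, which is the first claimed formula.

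For the special case $\mathcal G = \mathcal H$, I would repeat the same dualisation argument starting directly from Theorem \ref{thm1} applied to $T$ (so no inner embedding of $\mathcal H$ and no tracing out of $\mathcal H$ appears in the Schr\"odinger expression). The dual of $\operatorname{tr}_{\mathcal K} \circ \operatorname{Ad}_U \circ i_{\omega_K}$ is $\operatorname{tr}_{\omega_K} \circ \operatorname{Ad}_{U^\dagger} \circ i_{\mathcal K}$, yielding \eqref{eq:stinespring_q_h}. The only genuinely delicate point in the argument is bookkeeping, in particular making sure the tensor-factor orderings match when composing the partial traces/embeddings on the triple tensor product $\mathcal H \otimes \mathcal G \otimes \mathcal K$; once this is verified by expanding both sides of \eqref{eq:partial_trace_1}--\eqref{eq:partial_trace_2} on an arbitrary test operator, the rest is purely formal.
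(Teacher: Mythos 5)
Your proposal is correct and follows essentially the same route as the paper: the paper's own proof is exactly the observation that $(T_1\circ T_2)^*=T_2^*\circ T_1^*$ combined with Theorem \ref{thm_dual}(b), Theorem \ref{thm1} and Corollary \ref{coro_gen_stinespring}, which you have merely spelled out in more detail (including the identities $i_\omega^*=\operatorname{tr}_\omega$, $\operatorname{tr}_{\mathcal K}^*=i_{\mathcal K}$ and $(\operatorname{Ad}_U)^*=\operatorname{Ad}_{U^\dagger}$).
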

\begin{proof}
Note that (\ref{eq:dual_channel}) implies $(T_1\circ T_2)^*=T_2^*\circ T_1^*$ for arbitrary positive, linear 
maps $T_1$ and $T_2$. Hence this is a simple consequence of Theorem \ref{thm1}, Theorem \ref{thm_dual} (b)
and Corollary \ref{coro_gen_stinespring}.
\end{proof}

\begin{remark}
The result in Corollary \ref{coro_1} is a more structured version of Stinespring's theorem \cite{Stinespring}
for Heisenberg quantum channels due to the following: Let $S\in Q_H(\mathcal H)$ (the same argument works
for $S\in Q_H(\mathcal G,\mathcal H)$) and $\omega_K\in\mathbb D(\mathcal K)$ be the state from \eqref{eq:stinespring_q_h} of rank one, 
i.e.~$\omega_K=\langle y,\cdot\rangle y$ for some $y\in\mathcal K$ with $\|y\|=1$. Defining the isometric
embedding $V_y:\mathcal H\to\mathcal H\otimes\mathcal K$, $x\mapsto x\otimes y$, one readily verifies 
via \eqref{eq:partial_trace_2} that $\operatorname{tr}_{\omega_K}(B)=V_y^\dagger BV_y$ for all $B\in\mathcal B(\mathcal H\otimes\mathcal K)$. Now \eqref{eq:stinespring_q_h} becomes
$$
S(\cdot)=V^\dagger\pi(\cdot) V
$$
with the auxiliary Hilbert space $\mathcal H\otimes\mathcal K$ being of tensor form, the 
Stinespring isometry $V = UV_y$ and the unital $*$-homomorphism 
$\pi:\mathcal B(\mathcal H)\to\mathcal B(\mathcal H\otimes\mathcal K) $ being 
$\pi(B):= B\otimes \operatorname{id}_{\mathcal K}$.
To the best of our knowledge, the above representation \eqref{eq:stinespring_q_h} so far only 
appeared in an unpublished (as of now) book by S.~Attal\cite[Thm.~6.15]{Attal6}.
\end{remark}

The above concept of dual channels will be useful to transfer dilation results from the Schr\"odinger
to the Heisenberg picture and vice versa so one is independent of the used quantum-mechanical framework.

\section{Main Results}\label{sec:main_results}

\subsection{Unitary Dilation of Discrete-Time Quantum-Dynamical Systems}\label{sect:unit_dil_A}

\noindent Consider a discrete-time quantum-dynamical system, the evolution of which is described by
\begin{align}
\label{eq:DQDS}
\rho_{n+1}= T(\rho_n), \quad \rho_0\in \mathbb D(\mathcal H)
\end{align}
for arbitrary but fix $T\in Q_S(\mathcal H)$. Obviously, the explicit solution of \eqref{eq:DQDS} is given by
\begin{align*}
\rho_{n}= T^n(\rho_0)
\end{align*}
for all $n \in \mathbb N_0$ By Theorem \ref{thm_monoid}, one has $T^n\in Q_S(\mathcal H)$ and thus Theorem 
\ref{thm1} yields separable Hilbert spaces $\mathcal K_n$, pure states $\omega_n \in\mathbb D(\mathcal K_n)$
and unitaries $U_n\in \mathcal B(\mathcal H\otimes\mathcal K_n )$ such that
\begin{align}\label{eq:3_0}
 T^n(A)=\operatorname{tr}_{\mathcal K_n}\left( U_n(A\otimes\omega_n )U_n^\dagger \right)
\end{align}
for all $A\in \mathcal B^1(\mathcal H)$ and all $n\in\mathbb N_0$. Now our goal is to simplify the
right-hand side of (\ref{eq:3_0}) in the following sense: We want to embed the evolution of $\rho_0$ into 
an evolution of a closed discrete-time quantum-dynamical system, i.e.~we want to replace the r.h.s.~of (\ref{eq:3_0}) by
\begin{align*}
\operatorname{tr}_{\tilde{\mathcal K}}\left( V^n(A\otimes\tilde\omega)(V^\dagger)^n \right)
\end{align*}
where $V$ is an appropriate unitary operator and the separable Hilbert space $\tilde{\mathcal K}$ as well
as the pure state $\tilde\omega$ does no longer depend on $n \in \mathbb N_0$. Our established result reads 
as follows

\begin{theorem}\label{ch_4_satz_1}
For every $ T\in Q_S(\mathcal H)$ there exists a separable Hilbert space $\mathcal K$, a pure state 
$\omega\in\mathbb D(\mathcal K)$ and a unitary $V\in\mathcal B(\mathcal H\otimes\mathcal K)$ such that 
$(\mathcal H\otimes\mathcal{K},(V^n)_{n \in \mathbb Z}, i_\omega, \operatorname{tr}_{\mathcal K})$ is a unitary
dilation of $(T^n)_{n\in\mathbb N_0}$ (in the sense of Definition \ref{def:dilation-Schroedinger-channles}.2). 
In particular, for all $A\in\mathcal B^1(\mathcal H)$ and $n\in\mathbb N_0$, one has
\begin{align}\label{eq:3-00}
 T^n(A)=\operatorname{tr}_{\mathcal K}\left( V^n(A\otimes\omega)(V^\dagger)^n \right)\,.
\end{align}
\end{theorem}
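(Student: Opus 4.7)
The plan is to upgrade the one-step Stinespring representation from Theorem \ref{thm1} into an $n$-step dilation by coupling the system, at each step, to a \emph{fresh} copy of the Stinespring environment. This is achieved by constructing an infinite tensor product environment with a distinguished vacuum and using a shift to transport a fresh copy into the ``active slot'' with every application of $V$.

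First, apply Theorem \ref{thm1} to $T$ to obtain a separable Hilbert space $\mathcal{K}_0$, a unit vector $\psi_0 \in \mathcal{K}_0$ with $\omega_0 := \langle \psi_0, \cdot\rangle \psi_0$, and a unitary $U_0 \in \mathcal{B}(\mathcal{H} \otimes \mathcal{K}_0)$ such that $T(A) = \operatorname{tr}_{\mathcal{K}_0}(U_0(A \otimes \omega_0)U_0^\dagger)$ for all $A \in \mathcal{B}^1(\mathcal{H})$. Next, form the countably infinite tensor product $\mathcal{K} := \bigotimes_{k \in \mathbb{Z}}(\mathcal{K}_0, \psi_0)$ in the sense of von Neumann, stabilised along the reference sequence $(\psi_0)_{k \in \mathbb{Z}}$; this is a separable Hilbert space, and the vacuum $\Omega := \bigotimes_{k \in \mathbb{Z}} \psi_0$ is a unit vector giving a pure state $\omega := \langle \Omega, \cdot\rangle \Omega$. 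Let $S \in \mathcal{B}(\mathcal{K})$ be the bilateral shift by one site; it is unitary, fixes $\Omega$, and hence satisfies $S\omega S^\dagger = \omega$. For each $k \in \mathbb{Z}$, let $\tilde{U}_k \in \mathcal{B}(\mathcal{H} \otimes \mathcal{K})$ be the ampliation of $U_0$ acting as $U_0$ on $\mathcal{H}$ and on the $k$-th slot of $\mathcal{K}$, and as the identity on every other slot. A direct computation gives the intertwining $(I_\mathcal{H} \otimes S^k)\tilde{U}_0(I_\mathcal{H} \otimes S^{-k}) = \tilde{U}_k$.

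Now set $V := \tilde{U}_0 \cdot (I_\mathcal{H} \otimes S)$, a unitary on $\mathcal{H} \otimes \mathcal{K}$, and extend to $n \in \mathbb{Z}$ by $V^{-n} := (V^\dagger)^n$. A short induction using the intertwining above yields
\begin{equation*}
V^n = \tilde{U}_0\,\tilde{U}_1 \cdots \tilde{U}_{n-1} \cdot (I_\mathcal{H} \otimes S^n), \qquad n \in \mathbb{N}_0.
\end{equation*}
Because $S\Omega = \Omega$ the outer shifts cancel in the sandwich $V^n(A \otimes \omega)(V^\dagger)^n$, and because $\omega$ factorises site by site as $\bigotimes_k \omega_0$, the slots $k \notin \{0,\ldots,n-1\}$ are left untouched by $\tilde{U}_0,\ldots,\tilde{U}_{n-1}$ and trace out as $\operatorname{tr}(\omega_0) = 1$. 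Thus $\operatorname{tr}_{\mathcal{K}}(V^n(A \otimes \omega)(V^\dagger)^n)$ reduces to the finite partial trace over $\mathcal{K}_0^{\otimes n}$ of $\tilde{U}_0 \cdots \tilde{U}_{n-1}(A \otimes \omega_0^{\otimes n})\tilde{U}_{n-1}^\dagger \cdots \tilde{U}_0^\dagger$, which by $n$-fold application of the one-step Stinespring identity (each $\tilde{U}_k$ commuting with the partial trace over any slot it does not touch) equals $T^n(A)$, giving \eqref{eq:3-00}. The required dilation data are then $(\mathcal{H} \otimes \mathcal{K}, (V^n)_{n \in \mathbb{Z}}, i_\omega, \operatorname{tr}_{\mathcal{K}})$.

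The main obstacle is the rigorous handling of the infinite tensor product $\bigotimes_{k \in \mathbb{Z}}(\mathcal{K}_0, \psi_0)$ when $\mathcal{K}_0$ is genuinely infinite-dimensional: one must verify separability of $\mathcal{K}$, well-definedness and unitarity of the shift $S$, and, most importantly, that $\operatorname{tr}_{\mathcal{K}}$ factorises over the sites in the precise way needed to collapse to the finite partial trace over $\mathcal{K}_0^{\otimes n}$. Once these von Neumann tensor product technicalities are in place, the algebraic identity for $V^n$ together with $S\Omega = \Omega$ and a straightforward induction based on Theorem \ref{thm1} mechanically produce \eqref{eq:3-00}.
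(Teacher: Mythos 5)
Your construction is correct, but it takes a genuinely different route from the paper's. The paper applies Theorem \ref{thm1} not just to $T$ but to every power $T^n$ (exploiting that the auxiliary space $\tilde{\mathcal K}$ and the pure state $\tilde\omega$ can be chosen universally), stores the resulting unitaries in the block operator $U=\sum_{n\in\mathbb Z}U_nU_{n-1}^\dagger\otimes e_ne_n^\dagger$ on $\mathcal H\otimes\tilde{\mathcal K}\otimes\ell_2(\mathbb Z)$, and sets $V=UW$ with $W$ the shift on the $\ell_2(\mathbb Z)$ register, so that by induction $V^n(A\otimes\omega)(V^\dagger)^n=U_n(A\otimes\tilde\omega)U_n^\dagger\otimes e_ne_n^\dagger$; the $\ell_2(\mathbb Z)$ factor is a mere pointer recording which $U_n$ has been accumulated. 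You instead dilate only $T$ itself and build a repeated-interaction (collision-model) dilation: the bilateral shift on the stabilized infinite tensor product $\bigotimes_{k\in\mathbb Z}(\mathcal K_0,\psi_0)$ feeds a fresh vacuum copy of the environment into the interaction at each step, and the semigroup law emerges dynamically from the site-wise factorization of the partial trace. Your route buys a single invocation of Kraus/Stinespring and a dilation with a natural Markovian, repeated-interaction structure---in spirit close to K\"ummerer's embedding into an infinite product, which the authors deliberately sidestep. The price is exactly the machinery you flag: separability of the von Neumann product, unitarity of the shift, the intertwining giving $V^n=\tilde U_0\cdots\tilde U_{n-1}(I_{\mathcal H}\otimes S^n)$, and the factorization $\operatorname{tr}_{\mathcal K}=\operatorname{tr}_{\mathcal K_0^{\otimes n}}\circ\operatorname{tr}_{\mathrm{rest}}$ over untouched slots; these are standard and do go through, but the paper's elementary auxiliary space $\tilde{\mathcal K}\otimes\ell_2(\mathbb Z)$ avoids them, and its per-power bookkeeping transfers directly to the cyclic refinement (Theorem \ref{ch_4_koro_1}) and the commuting-control setting (Theorem \ref{ch_4_satz_2}), where a one-channel collision model does not immediately apply. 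As in the paper, you should still record the short check that $J=i_\omega$ and $E=\operatorname{tr}_{\mathcal K}$ fulfill Definition \ref{def:dilation-Schroedinger-channles}, in particular $E\circ J=\operatorname{id}_{\mathcal B^1(\mathcal H)}$ and \eqref{eq:dual_star_hom}.
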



\begin{proof}
First we consider the $n$-dependence of ${\mathcal K}_n$ and $\omega_n$. By construction, cf.~Theorem \ref{thm1}, 
${\mathcal K}_n$ does not depend on $T^n$ anymore, thus we can choose $\tilde{\mathcal K}$ with a countably infinite
basis, for example $\tilde{\mathcal K}=\ell_2(\mathbb N)$, and replace every ${\mathcal K}_n$ with $\tilde{\mathcal K}$. 
Moreover, also by construction, the pure state $\omega_n$ is determined via ${\mathcal K}_n$ and thus can be chosen
independently of $n$, too. Hence we obtain a joint Hilbert space $\tilde{\mathcal K}$ and a pure state $\tilde\omega$
such that
\begin{align*}
 T^n( A)=\operatorname{tr}_{\tilde{\mathcal K}}\left( U_n( A\otimes\tilde\omega )U_n^\dagger \right).
\end{align*}
for all $n\in\mathbb N_0$. Finally, in order to remove the $n$-dependence of the unitary operators $U_n$ we
define $\mathcal K:=\tilde{\mathcal K}\otimes\ell_2(\mathbb Z)$ and $U_n=\operatorname{id}_{\mathcal H\otimes\tilde{\mathcal K} }$
for all $n\leq 0$. Furthermore, let $( e_n)_{n\in\mathbb Z}$ denote the standard basis of $\ell_2(\mathbb Z)$ so 
$\sigma: \ell_2(\mathbb Z) \to \ell_2(\mathbb Z)$ given by $\sigma=\sum_{i\in\mathbb Z}e_ie_{i-1}^\dagger$ yields the right
shift on $\ell_2(\mathbb Z)$. With this 
$U,W:\mathcal B(\mathcal H\otimes\mathcal K)\to\mathcal B(\mathcal H\otimes\mathcal K)$ are defined by
\begin{align*}
{U}:=\sum\nolimits_{n\in\mathbb Z} U_nU_{n-1}^\dagger\otimes e_ne_n^\dagger
\qquad\text{and}\qquad
{W}:= \operatorname{id}_{\mathcal H\otimes\tilde{\mathcal K} }\otimes\,\sigma\,.
\end{align*}
Thus $U$ can be visualised as follows:
\begin{align*}
\underset{\hspace*{-65pt}\begin{matrix}\uparrow & &\end{matrix}}
{\begin{pmatrix}
 \ddots &&&&& \\&\operatorname{id}_{\mathcal H\otimes\tilde{\mathcal K} }&&&& \\&&U_1&&& \\&&&U_2U_1^\dagger&& \\&&&&U_3U_2^\dagger& \\&&&&&\ddots
\end{pmatrix}}
\begin{matrix}
\\\leftarrow\\\\\\\\\\
\end{matrix}\,,
\end{align*}
where the arrows indicate the zero-zero entry of this both-sided ``infinite matrix''. A simple calculation 
shows that $U$, $W$ and therefore also $V:=UW$ are unitary. Next, using the results from Section \ref{sec:dualchannel}, one readily verifies that the maps 
$E:=\operatorname{tr}_{\mathcal K}$ and $J:=i_\omega$ (where 
$\omega:=\tilde\omega \otimes e_0e_0^\dagger\in\mathbb D(\mathcal K)$ is obviously pure) 
satisfy the conditions from Definition \ref{def:dilation-Schroedinger-channles}.1. Then, by induction,
one shows
\begin{equation*}
V^n (A\otimes\omega) (V^\dagger)^n=U_n( A\otimes\tilde\omega ) U_n^\dagger\otimes e_ne_n^\dagger
\end{equation*}
for all $A \in \mathcal B^1(\mathcal H)$ and $n\in\mathbb N_0$. Finally,
$\operatorname{id}_{\tilde{\mathcal K}\otimes\ell_2}=\operatorname{id}_{\tilde{\mathcal K}}\otimes\operatorname{id}_{\ell_2}$ 
implies $\operatorname{tr}_{\tilde{\mathcal K}\otimes\ell_2}=\operatorname{tr}_{\tilde{\mathcal K}}\circ\operatorname{tr}_{\ell_2}$
so
\begin{align*}
\operatorname{tr}_{\mathcal K}\left( V^n(A\otimes\omega)(V^\dagger)^n \right)
= \operatorname{tr}_{\tilde{\mathcal K}}(\operatorname{tr}_{\ell_2}(U_n( A\otimes\tilde\omega ) U_n^\dagger\otimes e_ne_n^\dagger))
= \operatorname{tr}_{\tilde{\mathcal K}}( U_n( A\otimes\tilde\omega )U_n^\dagger ) = T^n( A)
\end{align*}
for all $A \in \mathcal B^1(\mathcal H)$ and $n\in\mathbb N_0$. Hence we constructed a unitary dilation of 
$(T^n)_{n\in\mathbb N_0}$ of the form \eqref{eq:3-00} which concludes the proof.
\end{proof}

\begin{remark}
Note that $i_\omega$ is trace-preserving because $\omega\in\mathbb D(\mathcal K)$, so the above (tensor type) dilation
is trace-preserving.
\end{remark}

Now we can easily extend this result to Heisenberg quantum channels.

\begin{coro}\label{coro_Q_H_1}
For every $S\in Q_H (\mathcal H)$ there exists a separable Hilbert space $\mathcal K$, a pure state 
$\omega\in\mathbb D(\mathcal K)$ and a unitary $V\in\mathcal B(\mathcal H\otimes\mathcal K)$ such that 
$(\mathcal H\otimes\mathcal{K},((V^\dagger)^n)_{n \in \mathbb Z}, i_{\mathcal K}, \operatorname{tr}_{\omega})$ is a 
unitary dilation of $(S^n)_{n\in\mathbb N_0}$ (in the sense of Definition \ref{def:dilation-Heisenberg-channles}.2). 
In particular, for all $B\in\mathcal B(\mathcal H)$ and $n\in\mathbb N_0$, one has
\begin{align*}
  S^n(B)= \operatorname{tr}_\omega\left( (V^\dagger)^n(B\otimes\operatorname{id}_{\mathcal K})V^n \right).
\end{align*}
\end{coro}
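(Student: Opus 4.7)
The natural plan is to reduce to Theorem \ref{ch_4_satz_1} by means of the duality developed in Section \ref{sec:dualchannel}. Given a Heisenberg channel $S \in Q_H(\mathcal H)$, Theorem \ref{thm_dual}(b) supplies a unique Schrödinger channel $T \in Q_S(\mathcal H)$ with $T^* = S$. Applying Theorem \ref{ch_4_satz_1} to this $T$ produces a separable Hilbert space $\mathcal K$, a pure state $\omega \in \mathbb D(\mathcal K)$, and a unitary $V \in \mathcal B(\mathcal H \otimes \mathcal K)$ with
\begin{align*}
T^n(A) = \operatorname{tr}_{\mathcal K}\bigl(V^n (A \otimes \omega)(V^\dagger)^n\bigr) \qquad \text{for all } A \in \mathcal B^1(\mathcal H),\ n \in \mathbb N_0.
\end{align*}
I claim that the same $\mathcal K$, $\omega$, and $V$ (with the order of $V$ and $V^\dagger$ reversed) yield the desired Heisenberg dilation.

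The key is to take duals factor by factor. Write $T^n = \operatorname{tr}_{\mathcal K} \circ \Phi_n \circ i_\omega$, where $\Phi_n(X) := V^n X (V^\dagger)^n$ is the unitary conjugation by $V^n$. A direct trace computation shows that the dual of $\Phi_n$ on $\mathcal B(\mathcal H \otimes \mathcal K)$ is $\Phi_n^*(Y) = (V^\dagger)^n Y V^n$. Combined with the identities $\operatorname{tr}_{\mathcal K}^* = i_{\mathcal K}$ and $i_\omega^* = \operatorname{tr}_\omega$ recorded in Section \ref{sec:dualchannel}, and with $(T_1 \circ T_2)^* = T_2^* \circ T_1^*$, one obtains
\begin{align*}
S^n(B) = (T^*)^n(B) = (T^n)^*(B) = \operatorname{tr}_\omega\bigl((V^\dagger)^n (B \otimes \operatorname{id}_{\mathcal K}) V^n\bigr)
\end{align*}
for all $B \in \mathcal B(\mathcal H)$ and $n \in \mathbb N_0$, which is precisely the required formula.

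To conclude that $(\mathcal H \otimes \mathcal K, ((V^\dagger)^n)_{n \in \mathbb Z}, i_{\mathcal K}, \operatorname{tr}_\omega)$ is a unitary dilation in the sense of Definition \ref{def:dilation-Heisenberg-channles}.2, one has to verify the structural axioms: $((V^\dagger)^n)_{n \in \mathbb Z}$ is a group of $*$-automorphisms of $\mathcal B(\mathcal H \otimes \mathcal K)$ (immediate, since $V$ is unitary), $i_{\mathcal K}$ is a normal unital $*$-homomorphism (clear from its tensor form), and $\operatorname{tr}_\omega$ is an ultraweakly continuous conditional expectation onto $i_{\mathcal K}(\mathcal B(\mathcal H))$ with $\operatorname{tr}_\omega \circ i_{\mathcal K} = \operatorname{id}_{\mathcal B(\mathcal H)}$ (this follows from $\operatorname{tr}(\omega)=1$ and the defining identity \eqref{eq:partial_trace_2}).

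The only mild obstacle is bookkeeping in the two dualities — confirming that the inversion $V^n \leftrightarrow (V^\dagger)^n$ on the two sides of the sandwich is forced by passing from $\mathcal B^1$ to $\mathcal B$, and that the two partial-trace conventions \eqref{eq:partial_trace_1} and \eqref{eq:partial_trace_2} are indeed exchanged under $*$ as stated before Corollary \ref{coro_1}. Once these conventions are lined up, the proof is essentially a one-line dualization of Theorem \ref{ch_4_satz_1}.
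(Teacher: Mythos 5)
Your proposal is correct and follows essentially the same route as the paper: reduce to the Schr\"odinger case via Theorem \ref{thm_dual}(b), apply Theorem \ref{ch_4_satz_1}, and dualize factor by factor using $(T_1\circ T_2)^*=T_2^*\circ T_1^*$, $\operatorname{tr}_{\mathcal K}^*=i_{\mathcal K}$ and $i_\omega^*=\operatorname{tr}_\omega$. The only difference is that you spell out the verification of the axioms of Definition \ref{def:dilation-Heisenberg-channles}.2, which the paper leaves implicit.
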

\begin{proof}
By Theorem \ref{thm_dual} (b) there exists a unique $T\in Q_S(\mathcal H)$ such that $S=T^*$ and therefore 
$S^n=(T^*)^n=(T^n)^*$. Now Theorem \ref{ch_4_satz_1} yields a separable Hilbert space $\mathcal K$, a pure
state $\omega\in\mathbb D(\mathcal K)$ and a unitary $V$ such that (\ref{eq:3-00}) holds. 
By duality we obtain
\begin{align*}
S^n(B) = \Big(\operatorname{tr}_{\mathcal K}\left( V^n i_\omega(\cdot) (V^\dagger)^n \right)\Big)^*(B)
=\operatorname{tr}_\omega\left( (V^\dagger)^n(B\otimes\operatorname{id}_{\mathcal K})V^n \right)
\end{align*}
for all $B \in \mathcal B(\mathcal H)$ and for all $n\in\mathbb N_0$.
\end{proof}

\begin{remark}\label{rem_subspace}
\begin{enumerate}
\item
Due to $\operatorname{id}_{\mathcal H}\otimes \operatorname{id}_{\mathcal K} = \operatorname{id}_{\mathcal H\otimes\mathcal K}$
we even constructed a unital dilation (of tensor type).
\item
Recall that a ``classical'' unitary dilation $T^n = P_{\mathcal H} \circ U^n \circ \operatorname{inc}_{\mathcal H}$ of 
some Hilbert space contraction $T: \mathcal H \to \mathcal H$ (cf.~Rem.~\ref{rem:classical-dilations}.3), 
where $P_{\mathcal H}$ denotes the orthogonal projection onto $\mathcal H$ and $\operatorname{inc}_{\mathcal H}$
the inclusion map, is called 
\emph{minimal} if the domain of $U\in\mathcal B( \mathcal K )$ is minimal in the sense of
\begin{equation}\label{eq1:classical_dilation}
\mathcal K = \bigvee_{n \in \mathbb Z} U^n \mathcal H\,.
\end{equation}
Here the right-hand side of \eqref{eq1:classical_dilation} denotes the smallest closed subspace of $\mathcal K$
which contains all images $U^n \mathcal H$, $n \in \mathbb Z$, cf.~\cite{NagyFoias1970,FoiasFrazho1990}.
K\"ummerer\cite{Kuemmerer83} captures this idea and defines a dilation $T^n(A) = E\big(\hat{T}^n(J(A))\big)$ of 
an ultraweakly continuous, completely positive and unital map $T:\mathcal A \to \mathcal A$ on a 
$W^*$-algebra $\mathcal A$ to be minimal if 
\begin{equation}\label{eq2:classical_dilation}
\mathcal A = \bigvee_{n \in \mathbb Z}  \hat{T}^n (i(\mathcal A)) 
\end{equation}
holds, where the right-hand side of \eqref{eq2:classical_dilation} now denotes the smallest
closed $W^*$-algebra which contains all images $\hat{T}^n (i(\mathcal A))$, $n \in \mathbb Z$, 
cf.~\cite[Def.~2.1.5]{Kuemmerer83}. It is easy to see that our constructions in Theorem \ref{ch_4_satz_1}  
/ Corollary \ref{coro_Q_H_1} do in general not lead to a minimal dilation in the above sense. However,
one can always restrict a given dilation to the right-hand side of \eqref{eq2:classical_dilation} to obtain 
a minimal one. 
\item
As seen above in \eqref{eq1:classical_dilation} the space 
$\mathcal H^{\infty}_{-\infty} := \bigvee_{n \in \mathbb Z} U^{-n} \mathcal H$ and its forward and backward
invariant counterparts
\begin{equation*}
\mathcal H^{\infty} := \bigvee_{n \in \mathbb N_0} U^n \mathcal H\,,
\quad\text{and}\quad
\mathcal H_{-\infty} := \bigvee_{n \in \mathbb N_0} U^{-n} \mathcal H\,, 
\end{equation*}
play an essential role in the theory of ``classical'' unitary dilations. In particular, they admit 
orthogonal decompositions 
\begin{equation}\label{eq4:classical_dilation}
\mathcal H^{\infty} = \mathcal H \oplus \hat{\mathcal H}^{\infty}\,,
\quad \mathcal H_{\infty} = \mathcal H \oplus \hat{\mathcal H}_{-\infty} 
\quad\text{and}\quad
\mathcal H^{\infty}_{-\infty}  = \hat{\mathcal H}^{\infty} \oplus \mathcal H \oplus \hat{\mathcal H}_{-\infty}
\end{equation}
such that $\hat{\mathcal H}^{\infty}$ and $\hat{\mathcal H}_{-\infty}$ are invariant under $U$ and $U^{-1}$,
respectively, cf.~\cite[Lemma VI.3.1]{FoiasFrazho1990} and \cite{Note3}.
Eventually, \eqref{eq4:classical_dilation} establishes the relation to K\"ummerer's notion of Markovianity, 
cf.~\cite[Prop.~2.2.3 (b)]{Kuemmerer83}.
\end{enumerate}
\end{remark}

Next we want to improve Theorem \ref{ch_4_satz_1} for cyclic $T$, i.e.~in the case of $T^m= T$ for some 
$m\in\mathbb N\setminus\lbrace 1\rbrace$. 
\begin{defi}\label{ch_4_def_1}
In doing so, we define a modified modulo function
\begin{align*}
\nu:\mathbb N\setminus\lbrace 1\rbrace\times\mathbb N&\to\mathbb N\\
(m,n)&\mapsto (n-1)\operatorname{mod}(m-1)+1
\end{align*}
as well as
\begin{align*}
\mu:\mathbb N\setminus\lbrace 1\rbrace\times\mathbb N&\to\mathbb N_0\\
(m,n)&\mapsto \frac{n-\nu(m,n)}{m-1}\,.
\end{align*}
\end{defi}

\noindent
To connect $\nu(m,n)$ to the above cyclicity condition of $T$ we represent $n-1$ as
\begin{align}\label{eq:ch_4_bem_1_0}
n-1=j(m-1)+r
\end{align}
with unique $j\in\mathbb N_0$ and $r\in\lbrace 0,\ldots,m-2\rbrace$. This yields $\nu(m,n)=r+1$ as well as 
$\mu(m,n)=j\in\mathbb N_0$ and we obtain the following result.

\begin{lemma}\label{ch_4_lemma_2}
Let $ T\in Q_S(\mathcal H)$ be cyclic so $ T^m= T$ for $m\in\mathbb N\setminus\lbrace 1\rbrace$. Then
\begin{align*}
 T^n= T^{\nu(m,n)}
\end{align*}
for all $n\in\mathbb N$.
\end{lemma}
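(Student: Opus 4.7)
The core observation is that the cyclicity relation $T^m = T$ instantly upgrades to a periodicity relation on the orbit $(T^n)_{n\in\mathbb N}$: for every $n \geq 1$ one has
\begin{align*}
T^{n+(m-1)} \;=\; T^{n-1}\circ T^m \;=\; T^{n-1}\circ T \;=\; T^n,
\end{align*}
so $(T^n)_{n\in\mathbb N}$ is periodic with period $m-1$ (the composition is well-defined thanks to $Q_S(\mathcal H)$ being a monoid under composition by Theorem \ref{thm_monoid}). The proof then reduces to the purely arithmetic task of identifying $\nu(m,n)$ as a canonical representative of $n$ modulo the period $m-1$ inside the index set $\{1,\dots,m-1\}$.

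The plan is a one-line induction on $\mu(m,n) = j$ using the decomposition $n-1 = j(m-1)+r$ with $r\in\{0,\dots,m-2\}$ from \eqref{eq:ch_4_bem_1_0}. For the base case $j=0$ one has $n = r+1 = \nu(m,n)$, so the claim is trivial. For the inductive step, assuming $T^{j(m-1)+r+1} = T^{r+1}$, I compute
\begin{align*}
T^{(j+1)(m-1)+r+1} \;=\; T^{m-1}\circ T^{j(m-1)+r+1} \;=\; T^{m-1}\circ T^{r+1} \;=\; T^{m+r} \;=\; T^m\circ T^r \;=\; T\circ T^r \;=\; T^{r+1},
\end{align*}
which closes the induction (the case $r=0$ uses $T^0 = \operatorname{id}_{\mathcal B^1(\mathcal H)}$, which lies in $Q_S(\mathcal H)$ by Theorem \ref{thm_monoid}). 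Since $\nu(m,n) = r+1$, this is exactly the desired identity $T^n = T^{\nu(m,n)}$.

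There really is no substantial obstacle here: the only thing to watch is keeping the bookkeeping straight between the modified modulo function $\nu$ (which takes values in $\{1,\dots,m-1\}$ rather than $\{0,\dots,m-2\}$) and the decomposition \eqref{eq:ch_4_bem_1_0}, and checking that the shift by $m-1$ respects the cyclicity condition $T^m = T$ in the right direction. Everything else is automatic from the semigroup structure of $Q_S(\mathcal H)$.
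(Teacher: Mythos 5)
Your proof is correct and follows essentially the same route as the paper: both reduce the exponent via the decomposition $n-1=j(m-1)+r$ from \eqref{eq:ch_4_bem_1_0} together with $T^m=T$. The paper compresses this into the single line $T^n=T^{r+1-j}(T^m)^j=T^{r+1-j}T^j=T^{r+1}$, whereas your induction on $j=\mu(m,n)$ carries out the same reduction step by step (and, as a minor bonus, avoids the formally negative exponent $r+1-j$ that the paper's one-liner exhibits when $j>r+1$).
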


\begin{proof}
Via (\ref{eq:ch_4_bem_1_0}) we get $T^n=T^{j(m-1)+r+1}=T^{r+1-j}(T^m)^j=T^{r+1-j}T^j=T^{r+1}=T^{\nu(m,n)}$.
\end{proof}

\noindent
Thus $\mu(m,n)$ indicates how often the cyclicity condition of $T$ can be applied to reduce the exponent $n$ to 
its remaining non-cyclic portion $\nu(m,n)$. With this we obtain the following simplification of Theorem \ref{ch_4_satz_1}.

\begin{theorem}\label{ch_4_koro_1}
Let $ T\in Q_S(\mathcal H)$ be cyclic, i.e.~$ T^m= T$ for some $m\in\mathbb N\setminus\lbrace 1\rbrace$. Then
for the unitary dilation $(\mathcal H\otimes\mathcal{K},(V^n)_{n \in \mathbb Z}, i_\omega, \operatorname{tr}_{\mathcal K})$ 
of $(T^n)_{n\in\mathbb N_0}$ from Theorem \ref{ch_4_satz_1}, one can choose $\mathcal K=\tilde{\mathcal K}\otimes\mathbb C^m$ 
such that (after modifying $V$  and $\omega$ accordingly)
\begin{align*}
 T^n( A)=\operatorname{tr}_{\mathcal K}\left( V^{n+\mu(m,n)}(A\otimes\omega)(V^\dagger)^{n+\mu(m,n)} \right)
\end{align*}
for all $A\in\mathcal B^1(\mathcal H)$ and all $n\in\mathbb N_0$. Note that $\omega\in\mathbb D(\mathcal K)$ still
is a pure state.
\end{theorem}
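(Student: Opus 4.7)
My plan is to replay the construction in the proof of Theorem \ref{ch_4_satz_1}, but to replace the infinite clock $\ell_2(\mathbb{Z})$ by the finite cyclic register $\mathbb{C}^m$, with the cyclicity $T^m=T$ forcing a matching periodicity of the dilation. Using the remark following Theorem \ref{thm1}, I first fix a single separable Hilbert space $\tilde{\mathcal{K}}$ and a pure state $\tilde{\omega}\in\mathbb{D}(\tilde{\mathcal{K}})$ together with unitaries $U_1,\dots,U_{m-1}\in\mathcal{B}(\mathcal{H}\otimes\tilde{\mathcal{K}})$ such that $T^k(A)=\operatorname{tr}_{\tilde{\mathcal{K}}}(U_k(A\otimes\tilde{\omega})U_k^\dagger)$ for $k=1,\dots,m-1$. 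I then set $U_0:=\operatorname{id}_{\mathcal{H}\otimes\tilde{\mathcal{K}}}$ and read all indices modulo $m$; this choice is the hinge of the argument, as it ensures $U_0 U_{m-1}^\dagger=U_{m-1}^\dagger$.

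On $\mathcal{K}:=\tilde{\mathcal{K}}\otimes\mathbb{C}^m$ with standard basis $(e_0,\dots,e_{m-1})$, I introduce the cyclic shift $\sigma e_i:=e_{(i+1)\bmod m}$ and define
\[
U:=\sum_{n=0}^{m-1} U_n U_{n-1}^\dagger\otimes e_n e_n^\dagger,\qquad W:=\operatorname{id}_{\mathcal{H}\otimes\tilde{\mathcal{K}}}\otimes\sigma,\qquad V:=UW,
\]
together with the pure state $\omega:=\tilde{\omega}\otimes e_0 e_0^\dagger\in\mathbb{D}(\mathcal{K})$. Unitarity of $V$ is immediate, since $\sigma$ and each block $U_n U_{n-1}^\dagger$ are unitary. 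A short induction, identical in spirit to the one in the proof of Theorem \ref{ch_4_satz_1}, yields
\[
V^k(A\otimes\omega)(V^\dagger)^k=U_k(A\otimes\tilde{\omega})U_k^\dagger\otimes e_k e_k^\dagger
\]
for $k=0,1,\dots,m-1$. Applying $V$ once more to the $k=m-1$ expression and using $U_0=\operatorname{id}$ collapses the outer factors to $U_{m-1}^\dagger U_{m-1}(A\otimes\tilde{\omega})U_{m-1}^\dagger U_{m-1}\otimes e_0 e_0^\dagger=A\otimes\omega$, so that the map $N\mapsto V^N(A\otimes\omega)(V^\dagger)^N$ is periodic in $N$ with period $m$.

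Since $n+\mu(m,n)=\mu(m,n)\cdot m+\nu(m,n)$ with $\nu(m,n)\in\{1,\dots,m-1\}$ by Definition \ref{ch_4_def_1} (and $n=0$ is covered trivially by $V^0=\operatorname{id}$), the above periodicity reduces $V^{n+\mu(m,n)}(A\otimes\omega)(V^\dagger)^{n+\mu(m,n)}$ to $U_{\nu(m,n)}(A\otimes\tilde{\omega})U_{\nu(m,n)}^\dagger\otimes e_{\nu(m,n)} e_{\nu(m,n)}^\dagger$. Tracing out $\mathcal{K}$ via $\operatorname{tr}_{\mathcal{K}}=\operatorname{tr}_{\tilde{\mathcal{K}}}\circ\operatorname{tr}_{\mathbb{C}^m}$ and invoking Lemma \ref{ch_4_lemma_2} then gives $T^{\nu(m,n)}(A)=T^n(A)$, which is the desired identity. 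I expect no deep obstacle here; the main care points are the bookkeeping of cyclic indices and the observation that $U_0=\operatorname{id}$ is \emph{exactly} what is needed so that $V^m$ acts trivially on $A\otimes\omega$, without any need (nor truth) for $V^m=\operatorname{id}$ as an operator on all of $\mathcal{B}(\mathcal{H}\otimes\mathcal{K})$.
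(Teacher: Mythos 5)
Your construction is exactly the paper's (block-diagonal $U$ with transition blocks $U_kU_{k-1}^\dagger$, cyclic shift $W$, $V=UW$, pure state sitting at the basis vector that shifts into position $1$), merely relabelled to indices $0,\dots,m-1$, and your verification---one-step induction up to $k=m-1$, the collapse $V^m(A\otimes\omega)(V^\dagger)^m=A\otimes\omega$ via $U_0=\operatorname{id}$, and the identity $n+\mu(m,n)=\mu(m,n)\,m+\nu(m,n)$ combined with Lemma \ref{ch_4_lemma_2}---is the same argument the paper compresses into ``one readily verifies via induction.'' So the proposal is correct and follows essentially the paper's own route.
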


\begin{proof}
Choose $\tilde{\mathcal K}$ and $\tilde\omega\in\mathbb D(\tilde{\mathcal K})$ as in the proof of Theorem
\ref{ch_4_satz_1}. For every $ T,\ldots, T^{m-1}$ there again exist unitary 
$U_1,\ldots,U_{m-1} \in \mathcal B(\mathcal H\otimes\tilde{\mathcal K})$ satisfying Theorem \ref{thm1}. This
allows to define
\begin{align*}
{U}:=\sum\nolimits_{i=1}^m U_iU_{i-1}^\dagger\otimes e_ie_i^\dagger
\quad\text{and}\quad 
{W}:=\operatorname{id}_{\mathcal H\otimes\tilde{\mathcal K}}\otimes \sum\nolimits_{i=1}^{m} e_{i+1}e_{i}^\dagger
\end{align*}
where $e_{m+1}:=e_1$ and $U_0:=\operatorname{id}_{\mathcal H\otimes\tilde{\mathcal K}}=:U_m$. Then ${W}$ represents a
cyclic shift acting on $\mathbb C^m$ and $U$ is of the following form.
\begin{align*}
{U}=\begin{pmatrix} U_1&&&& \\ &U_2U_1^\dagger&&& \\&&\ddots&&\\&&&U_{m-1}U_{m-2}^\dagger&\\&&&&U_{m-1}^\dagger \end{pmatrix}
\end{align*}
Obviously, $U$, ${W}$ and thus $V:=UW$ are unitary. Again choosing $E:=\operatorname{tr}_{\mathcal K}$ and
$J:=i_\omega$ with pure state $\omega:=\tilde\omega\otimes e_me_m^\dagger\in\mathbb D(\mathcal K)$, one readily verifies
via indiction
\begin{align*}
V^{n+\mu(m,n)} (A\otimes\omega)(V^\dagger)^{n+\mu(m,n)}
=U_{\nu(m,n)}( A\otimes\omega ) U_{\nu(m,n)}^\dagger\otimes e_{\nu(m,n)}e_{\nu(m,n)}^\dagger\,.
\end{align*}
for all $ A\in\mathcal B^1(\mathcal H)$ and $n\in\mathbb N$. Together with Lemma \ref{ch_4_lemma_2} one gets
\begin{align*}
\operatorname{tr}_{\mathcal K}\left( V^{n+\mu(m,n)}(A\otimes\omega)(V^\dagger)^{n+\mu(m,n)} \right)
&=\operatorname{tr}_{\tilde{\mathcal K}}\circ\operatorname{tr}_{\mathbb C^m}\left( U_{\nu(m,n)}( A\otimes\tilde\omega ) U_{\nu(m,n)}^\dagger\otimes e_{\nu(m,n)}e_{\nu(m,n)}^\dagger \right)\\
&=\operatorname{tr}_{\tilde{\mathcal K}}\left( U_{\nu(m,n)}( A\otimes\tilde\omega ) U_{\nu(m,n)}^\dagger\right)
= T^{\nu(m,n)}( A)= T^n( A)
\end{align*}
for all $A\in\mathcal B^1(\mathcal H)$ and $n\in\mathbb N$.
\end{proof}

\begin{remark}\label{ch_4_bem_3}
Note that quantum channels which have an inverse channel (or are ``just'' bijective with positive inverse) can be written as a unitary conjugation $\operatorname{Ad}_U$, cf.~Prop.~\ref{ch_3_Theorem_14}. For such channels, Theorem \ref{ch_4_satz_1} is trivially
fulfilled by choosing $\mathcal K=\mathbb C$, $E=J=\operatorname{id}_{\mathcal B^1(\mathcal H)}$ and $V=U$. The same holds
for cyclic quantum channels which are bijective because cyclicity implies
$T^{-1}=T^{m-2}\in Q_S(\mathcal H)$.
\end{remark}

\subsection{Unitary Dilation of Discrete-Time Quantum-Control Systems}\label{section_control}

Here, we investigate discrete-time quantum-mechanical control systems of the form
\begin{align}\label{eq:chap:3_2_1}
\rho_{n+1}= T_n(\rho_n), \quad \rho_0\in \mathbb D(\mathcal H)
\end{align}
where $T_n$, $n\in\mathbb N_0$ is regarded as control input which can be chosen freely from some subset 
$\mathcal C \subset Q_S(\mathcal H)$. We define $\rho(\cdot,(T_n)_{n\in\mathbb N_0},\rho_0)$ to be the unique 
solution of (\ref{eq:chap:3_2_1}) generated by the control sequence $(T_n)_{n\in\mathbb N_0}$ and the initial
value $\rho_0$. In the sequel, we are interested in whether the dynamics of \eqref{eq:chap:3_2_1}
can be embedded in the dynamics of a unitary discrete-time quantum control system of the same form. 

\begin{defi}\label{ch_4_defi_1}
Let $R_N(\rho_0)$ denote the set of all states which can be reached from $\rho_0$ in $N$ time steps
via (\ref{eq:chap:3_2_1}), i.e.
\begin{align*}
R_N(\rho_0):=\lbrace \rho(N,(T_n)_{n\in\mathbb N_0},\rho_0)\,|\,(T_n)_{n\in\mathbb N_0}\text{ arbitrary control sequence}\rbrace\,.
\end{align*}
Moreover, the overall reachable set of $\rho_0$ is defined by
\begin{align*}
R(\rho_0):=\bigcup\nolimits_{N\in\mathbb N_0}R_N(\rho_0).
\end{align*}
\end{defi}

\noindent
For the remaining section, we assume $\mathcal C:=\lbrace T,S\rbrace$ where $T$ and $S$ are commuting 
but otherwise arbitrary quantum channels over $\mathcal H$. Then the following result is a direct consequence 
of the fact that $T$ and $S$ commute.

\begin{lemma}\label{ch_4_lemma_3}
For all $N\in\mathbb N_0$ one has $R_N(\rho_0):=\lbrace T^kS^{N-k}\rho_0\,|\,k = 0,\ldots,N \rbrace$.
\end{lemma}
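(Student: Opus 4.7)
My plan is to prove both inclusions of the claimed set equality directly, using the commutativity assumption to collect factors.

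For the inclusion $\{T^kS^{N-k}\rho_0 : k=0,\dots,N\} \subseteq R_N(\rho_0)$, I would simply exhibit, for each $k\in\{0,\dots,N\}$, an admissible control sequence realising $T^kS^{N-k}\rho_0$. The natural choice is
\begin{align*}
T_n := \begin{cases} S & \text{if } 0 \le n \le N-k-1,\\ T & \text{if } N-k \le n \le N-1,\\ \text{arbitrary} & \text{if } n \ge N,\end{cases}
\end{align*}
since $\rho(N,(T_n)_{n\in\mathbb N_0},\rho_0) = T_{N-1}\circ\cdots\circ T_0(\rho_0)$, which for this choice equals $T^k S^{N-k}\rho_0$. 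No commutativity is needed here; any $k$ is attained.

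For the reverse inclusion, let $(T_n)_{n\in\mathbb N_0}$ be an arbitrary control sequence with $T_n \in \{T,S\}$. The resulting state at time $N$ is the composition $T_{N-1}\circ\cdots\circ T_0(\rho_0)$, which is a product of exactly $N$ factors, each being $T$ or $S$. Let $k := \#\{n \in \{0,\dots,N-1\} : T_n = T\}$, so there are $N-k$ indices with $T_n = S$. Since $T$ and $S$ commute on $\mathcal B^1(\mathcal H)$, we may reorder the factors in the composition at will, and
\begin{align*}
T_{N-1}\circ\cdots\circ T_0 = T^k \circ S^{N-k}
\end{align*}
as linear maps on $\mathcal B^1(\mathcal H)$. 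Hence $\rho(N,(T_n)_{n\in\mathbb N_0},\rho_0) = T^kS^{N-k}\rho_0$ with some $k \in \{0,\dots,N\}$, proving the inclusion.

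Putting both inclusions together yields the claimed equality. The argument is essentially a one-line consequence of commutativity, so I do not anticipate any obstacle; the only point deserving brief comment is that the reordering step is legitimate because composition of commuting linear operators is commutative (a formal induction on $N$ can be given if one prefers to avoid appealing to an informal ``reordering'').
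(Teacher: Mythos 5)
Your proof is correct and is exactly the argument the paper has in mind: the paper states the lemma without proof, calling it ``a direct consequence of the fact that $T$ and $S$ commute,'' and your two inclusions (explicit control sequences for one direction, commutativity-based reordering for the other) are the natural way to fill in that one-liner. No gaps; the optional induction you mention for the reordering step is the only formality one could add.
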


Based on this we are interested in dilations of quantum channels of the 
form $T^kS^{N-k}$.

\begin{theorem}\label{ch_4_satz_2}
Let $T,S\in Q_S(\mathcal H)$ be commuting. Then there exists a separable Hilbert space $\mathcal K$, a pure 
state $\omega\in\mathbb D(\mathcal K)$ and unitary $U,V\in\mathcal B(\mathcal H\otimes\mathcal K)$ such that 
\begin{align*}
T^kS^{N-k}( A)=\operatorname{tr}_{\mathcal K}\left( {U}^{k}{V}^{{N-k}}(A\otimes\omega)({V}^\dagger)^{{N-k}}({U}^{\dagger})^{k} \right)\,.
\end{align*}
for all $ A\in\mathcal B^1(\mathcal H)$, $N\in\mathbb N_0$ and $k = 0,\ldots,N$.
\end{theorem}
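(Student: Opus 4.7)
The plan is to lift the construction of Theorem~\ref{ch_4_satz_1} from a one-parameter ``time direction'' to a two-parameter one, indexed by the abelian monoid $\mathbb N_0^2$. Commutativity of $T$ and $S$ is used essentially at this first step: it makes $T^kS^\ell$ an unambiguously defined element of $Q_S(\mathcal H)$ depending only on the pair $(k,\ell)$, so the family to be dilated simultaneously is the two-parameter family $\{T^kS^\ell:(k,\ell)\in\mathbb N_0^2\}$. Exactly as in the proof of Theorem~\ref{ch_4_satz_1}, Theorem~\ref{thm1} then yields Stinespring unitaries $U_{k,\ell}\in\mathcal B(\mathcal H\otimes\tilde{\mathcal K})$ sharing a common separable auxiliary space $\tilde{\mathcal K}$ (e.g.~$\ell_2(\mathbb N)$) and a common pure state $\tilde\omega\in\mathbb D(\tilde{\mathcal K})$, satisfying $T^kS^\ell(A)=\operatorname{tr}_{\tilde{\mathcal K}}(U_{k,\ell}(A\otimes\tilde\omega)U_{k,\ell}^\dagger)$ for all $(k,\ell)\in\mathbb N_0^2$. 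Extend by $U_{k,\ell}:=\operatorname{id}_{\mathcal H\otimes\tilde{\mathcal K}}$ whenever $k<0$ or $\ell<0$ (and in particular for $(k,\ell)=(0,0)$).

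For the dilating unitaries I would then enlarge the ancilla to $\mathcal K:=\tilde{\mathcal K}\otimes\ell_2(\mathbb Z)\otimes\ell_2(\mathbb Z)$, with standard basis $(e_m)_{m\in\mathbb Z}$ on each $\ell_2(\mathbb Z)$-factor and right shift $\sigma=\sum_{i\in\mathbb Z}e_ie_{i-1}^\dagger$. Define two ``coupling'' unitaries
\begin{align*}
\mathcal U:=\sum_{(m,n)\in\mathbb Z^2}U_{m,n}U_{m-1,n}^\dagger\otimes e_me_m^\dagger\otimes e_ne_n^\dagger,\quad
\mathcal V:=\sum_{(m,n)\in\mathbb Z^2}U_{m,n}U_{m,n-1}^\dagger\otimes e_me_m^\dagger\otimes e_ne_n^\dagger,
\end{align*}
which are unitary by virtue of their block-diagonal decomposition over $\mathbb Z^2$ (each block being a unitary on $\mathcal H\otimes\tilde{\mathcal K}$), together with two shift unitaries $W_U:=\operatorname{id}_{\mathcal H\otimes\tilde{\mathcal K}}\otimes\sigma\otimes\operatorname{id}_{\ell_2(\mathbb Z)}$ and $W_V:=\operatorname{id}_{\mathcal H\otimes\tilde{\mathcal K}}\otimes\operatorname{id}_{\ell_2(\mathbb Z)}\otimes\sigma$. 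I would then set $U:=\mathcal UW_U$, $V:=\mathcal VW_V$, and $\omega:=\tilde\omega\otimes e_0e_0^\dagger\otimes e_0e_0^\dagger\in\mathbb D(\mathcal K)$, the latter being manifestly pure.

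The verification is a straightforward double induction (first on $\ell$ at $k=0$, then on $k$ with $\ell$ fixed) of the identity
\begin{align*}
U^kV^\ell(A\otimes\omega)(V^\dagger)^\ell(U^\dagger)^k = U_{k,\ell}(A\otimes\tilde\omega)U_{k,\ell}^\dagger\otimes e_ke_k^\dagger\otimes e_\ell e_\ell^\dagger.
\end{align*}
In the inductive step for $V$, the shift $W_V$ advances the projector in the second $\ell_2(\mathbb Z)$-factor from $e_ne_n^\dagger$ to $e_{n+1}e_{n+1}^\dagger$, and only the $(m,n+1)$-summand of $\mathcal V$ survives, producing the telescoping $U_{m,n+1}U_{m,n}^\dagger U_{m,n}=U_{m,n+1}$ on the system-plus-$\tilde{\mathcal K}$ factor; $\mathcal U$ and $W_U$ act analogously in the first $\ell_2(\mathbb Z)$-factor. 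Taking $\operatorname{tr}_{\mathcal K}=\operatorname{tr}_{\tilde{\mathcal K}}\circ\operatorname{tr}_{\ell_2(\mathbb Z)\otimes\ell_2(\mathbb Z)}$ and inserting the Stinespring identity for $U_{k,\ell}$ then yields the formula with $\ell=N-k$. The main obstacle is conceptual rather than computational: without the hypothesis $TS=ST$ one would have to index the Stinespring unitaries by words in the free monoid on two generators, forcing the ancilla to record the full history and destroying the clean tensor bookkeeping used here. Note also that $U$ and $V$ themselves generally do \emph{not} commute as operators on $\mathcal H\otimes\mathcal K$; it is only their particular ordered composition $U^kV^{N-k}$ (matching the ordering in $T^kS^{N-k}$ on the system side) that is being dilated, with the ``pointer'' $e_ke_k^\dagger\otimes e_\ell e_\ell^\dagger$ tracking exactly which Stinespring unitary has been inserted.
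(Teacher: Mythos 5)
Your proposal is correct and follows essentially the same route as the paper: a common Stinespring space $\tilde{\mathcal K}$ and pure state $\tilde\omega$ for all the channels $T^kS^\ell$, the enlarged ancilla $\tilde{\mathcal K}\otimes\ell_2(\mathbb Z)\otimes\ell_2(\mathbb Z)$ with pointer state $e_0e_0^\dagger\otimes e_0e_0^\dagger$, dilating unitaries of the form (block-diagonal coupling)$\times$(shift), a telescoping induction, and the factorized partial trace at the end. The only difference is bookkeeping: you index the Stinespring unitaries by $(k,\ell)=(\text{power of }T,\text{power of }S)$ with two independent shifts, whereas the paper indexes by $(N,k)$ so that its $V$ shifts only the first counter and its $U$ shifts both---a mere shear of the index lattice that leaves the argument unchanged.
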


\begin{proof}
For fixed $N\in\mathbb N$ and $k = 0,\ldots,N$, one has ${T}^{k} {S}^{{N-k}}\in Q_S(\mathcal H)$ 
by Theorem \ref{thm_monoid} and thus Theorem \ref{thm1} yields a separable Hilbert space $\mathcal K_{N,k} $, 
a pure state $\omega_{N,k} \in\mathbb D(\mathcal K_{N,k} )$ and unitary 
$U_{N,k}\in \mathcal B(\mathcal H\otimes\mathcal K )$ such that 
\begin{align*}
{T}^{k}  {S}^{{N-k}} (A) = \operatorname{tr}_{\mathcal K_{N,k}}\left( U_{N,k}( A\otimes\omega_{N,k} )U_{N,k}^\dagger \right).
\end{align*}
for all $ A\in \mathcal B^1(\mathcal H)$. The same line of arguments as in the proof of Theorem 
\ref{ch_4_satz_1} show that $\mathcal K_{N,k}$ and $\omega_{N,k}$ can be chosen
independently of $N$ and $k$,
so there exists some mutual auxiliary space $\tilde{\mathcal K}$ as well as a mutual pure state 
$\tilde\omega\in\mathbb D(\tilde{\mathcal K})$ such that
\begin{align}\label{eq:4-5}
{T}^{k}  {S}^{{N-k}} (A) =\operatorname{tr}_{\tilde{\mathcal K}}\left( U_{N,k}( A\otimes\tilde\omega )U_{N,k}^\dagger \right)
\end{align}
for all $ A\in \mathcal B^1(\mathcal H)$, $N\in\mathbb N$ and $k = 0,\ldots,N$. In particular, 
to every $ {T}^{k}{S}^{N-k}$ we can assign some unitary $U_{N,k}\in\mathcal B(\mathcal H\otimes\tilde{\mathcal K})$
such that (\ref{eq:4-5}) holds. Now, choose 
$\mathcal K:=\tilde{\mathcal K}\otimes\ell_2(\mathbb Z)\otimes \ell_2(\mathbb Z)$ and again 
$E:=\operatorname{tr}_{\mathcal K}$ and $J:=i_\omega$ with pure state 
$\omega:=\tilde\omega \otimes e_0e_0^\dagger\otimes e_0e_0^\dagger\in\mathbb D(\mathcal K)$. 
Moreover, by means of the right shift $\sigma$ from the proof of Theorem \ref{ch_4_satz_1} one defines
\begin{align*}
{W}_1:=\operatorname{id}_{\mathcal H\otimes\tilde{\mathcal K}}\otimes \,\sigma\otimes \,\sigma\,,\qquad
{U}_1&:=\sum_{m,n\in\mathbb Z} U_{m,n}U_{m-1,n-1}^\dagger \otimes e_me_m^\dagger\otimes e_ne_n^\dagger\,,\\
{W}_2:=\operatorname{id}_{\mathcal H\otimes\tilde{\mathcal K}}\otimes\,\sigma\otimes\operatorname{id}_{\ell_2}\,,\qquad
{U}_2&:=\sum_{n\in\mathbb Z}U_{n,0}U_{n-1,0}^\dagger \otimes e_ne_n^\dagger \otimes \operatorname{id}_{\ell_2}\,,
\end{align*}
where $U_{m,n}:=\operatorname{id}_{\mathcal H\otimes\tilde{\mathcal K}}$ if $m<1$ or $n\notin\lbrace 0,\ldots,m\rbrace$. 
Obviously, $W_1$ and $W_2$ are unitary. The unitarity of $U_1,$ and $U_2$ is readily verified via
the unitarity of $U_{N,k}$ so $U:=U_1W_1$ and $V:=U_2W_2$ are unitary, too. As before, by induction one shows 
\begin{align}\label{eq:commuting_lemma_1}
{V}^{j}(A\otimes\omega) ({V}^\dagger)^{j}=U_{{j},0} ( A\otimes\tilde\omega) U_{{j},0}^\dagger\otimes e_{j}e_{j}^\dagger\otimes e_0e_0^\dagger
\end{align}
for all $ A\in\mathcal B^1(\mathcal H)$ and $j \in \mathbb N_0$
and based on this
\begin{align}\label{eq:4-7}
 {U}^{k}{V}^{{N-k}}(A\otimes\omega)({V}^\dagger)^{{N-k}}({U}^{\dagger})^{k} =U_{N,k} ( A\otimes\tilde\omega) U_{N,k}^\dagger\otimes e_{N}e_{N}^\dagger\otimes e_ke_k^\dagger
\end{align}
for all $ A\in\mathcal B^1(\mathcal H)$, $N\in\mathbb N_0$ and $k = 0,\ldots,N$.
Note that the case $k=0$ reproduces (\ref{eq:commuting_lemma_1}) and thus can be omitted. 
Finally, \eqref{eq:4-5} and \eqref{eq:4-7} imply
\begin{align*}
\operatorname{tr}_{\mathcal K}\left( {U}^{k}{V}^{{N-k}}(A\otimes\omega)({V}^\dagger)^{{N-k}}({U}^{\dagger})^{k} \right)
&=\operatorname{tr}_{\tilde{\mathcal K}}(\operatorname{tr}_{\ell_2\otimes \ell_2}( U_{N,k} ( A\otimes\tilde\omega) U_{N,k}^\dagger\otimes e_{N}e_{N}^\dagger\otimes e_{k}e_{k}^\dagger ))\\
&=\operatorname{tr}_{\tilde{\mathcal K}}( U_{N,k}( A\otimes\tilde\omega )U_{N,k}^\dagger )= {T}^{k}  {S}^{{N-k}}  (A)
\end{align*}
for all $ A\in\mathcal B^1(\mathcal H)$, $N\in\mathbb N$ and $k = 0,\ldots,N$ which concludes this proof.
\end{proof}

\begin{remark}
The statement of Theorem \ref{ch_4_satz_2} can be extended to finitely many commuting channels 
$ T_1,\ldots, T_m\in Q_S(\mathcal H)$. Obviously, it is natural to choose
\begin{align*}
\mathcal K=\tilde{\mathcal K}\otimes\underbrace{\ell_2(\mathbb Z)\otimes\ldots\otimes \ell_2(\mathbb Z)}_{m\text{-times}}
\end{align*}
as common auxiliary space. The rest of the proof is completely analogous.
\end{remark}

We can now transfer the above result to obtain a characterization of the reachable set of the control 
system (\ref{eq:chap:3_2_1}).

\begin{coro}\label{ch_4_koro_3}
There exists a separable Hilbert space $\mathcal K$, a pure state $\omega\in\mathbb D(\mathcal K)$ and 
unitary $U,V\in\mathcal B(\mathcal H\otimes\mathcal K)$ such that
\begin{align*}
\rho(N,(T_n)_{n\in\mathbb N_0},\rho_0)=\operatorname{tr}_{\mathcal K}\left( {U}^{k}{V}^{{N-k}}(\rho_0\otimes\omega)({V}^\dagger)^{{N-k}}({U}^{\dagger})^{k} \right)
\end{align*}
for all controls $(T_n)_{n\in\mathbb N_0}$, initial states $\rho_0\in\mathbb D(\mathcal H)$ and $N\in\mathbb N_0$,
where $k=k(N,(T_n)_{n \in\mathbb N_0}) \in \lbrace 0,\ldots,N\rbrace$ counts how often $T$ occurs 
in the control sequence $(T_n)_{n\in\mathbb N_0}$ during the first $N$ time steps.	
\end{coro}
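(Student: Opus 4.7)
The plan is to combine Lemma \ref{ch_4_lemma_3} with Theorem \ref{ch_4_satz_2} directly. First I would observe that for any control sequence $(T_n)_{n\in\mathbb N_0}$ drawn from $\mathcal C = \{T,S\}$, the channel describing the evolution over the first $N$ time steps is the composition of $N$ factors, each equal to either $T$ or $S$. Since $T$ and $S$ commute by assumption, this product can be reordered and equals $T^{k}S^{N-k}$, where $k = k(N,(T_n)_{n\in\mathbb N_0})$ is precisely the number of indices $n \in \{0,\ldots,N-1\}$ with $T_n = T$. In particular
\begin{align*}
\rho(N,(T_n)_{n\in\mathbb N_0},\rho_0) = T^{k}S^{N-k}(\rho_0),
\end{align*}
which refines Lemma \ref{ch_4_lemma_3} from a statement about the reachable set to an explicit identification of the trajectory.

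Next I would invoke Theorem \ref{ch_4_satz_2} to obtain a separable Hilbert space $\mathcal K$, a pure state $\omega \in \mathbb D(\mathcal K)$, and unitaries $U, V \in \mathcal B(\mathcal H \otimes \mathcal K)$ such that
\begin{align*}
T^{k}S^{N-k}(A) = \operatorname{tr}_{\mathcal K}\bigl(U^{k}V^{N-k}(A \otimes \omega)(V^\dagger)^{N-k}(U^\dagger)^{k}\bigr)
\end{align*}
for all $A \in \mathcal B^1(\mathcal H)$, $N \in \mathbb N_0$ and $k \in \{0,\ldots,N\}$. Specialising to $A = \rho_0$ and $k = k(N,(T_n)_{n\in\mathbb N_0})$ then yields the claimed dilation formula.

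There is essentially no real obstacle here, since the construction work has already been done in Theorem \ref{ch_4_satz_2}. The only mild subtlety worth stressing in the write-up is uniformity: the same pair $(U,V)$ and the same auxiliary data $(\mathcal K,\omega)$ serve \emph{all} control sequences and \emph{all} $N \in \mathbb N_0$ simultaneously, with the dependence on the specific control history entering only through the scalar exponent $k$. This is exactly what the commutativity-based reordering delivers, and it is what makes the statement a genuine embedding of the open-loop control system into a unitary one on the enlarged space.
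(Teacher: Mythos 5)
Your proposal is correct and follows essentially the same route as the paper: identify the trajectory after $N$ steps as $T^{k}S^{N-k}(\rho_0)$ using commutativity (the content of Lemma~\ref{ch_4_lemma_3}) and then apply Theorem~\ref{ch_4_satz_2} with $A=\rho_0$. Your only deviation is a minor refinement, namely spelling out that $k$ is exactly the number of occurrences of $T$ in the control sequence rather than merely citing the reachable-set lemma for the existence of some $k$, which if anything matches the statement of the corollary slightly more precisely.
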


\begin{proof}
By Definition \ref{ch_4_defi_1}, $\rho(N,(T_n)_{n\in\mathbb N_0},\rho_0)\in R_N(\rho_0)$ and hence by Lemma 
\ref{ch_4_lemma_3} there exists $k\in\lbrace 0,\ldots,N\rbrace$ such that $\rho(N,(T_n)_{n\in\mathbb N_0},\rho_0)=T^kS^{N-k}(\rho_0)$. Thus the result follows immeditely from Theorem \ref{ch_4_satz_2}.
\end{proof}

\begin{coro}\label{ch_4_koro_2}
Let $\mathcal K$, $\omega\in\mathbb D(\mathcal K)$ and $U,V\in\mathcal B(\mathcal H\otimes\mathcal K)$ 
be as in Corollary \ref{ch_4_koro_3}. Then, for all $N\in\mathbb N_0$ and $\rho_0\in\mathbb D(\mathcal H)$
one has 
\begin{align}\label{eq:ch_4_koro_2_1}
R_N(\rho_0)\subseteq \operatorname{tr}_{\mathcal K}(\tilde R_N(\rho_0\otimes\omega))
\end{align}
and thus $R(\rho_0)\subseteq \operatorname{tr}_{\mathcal K}(\tilde R(\rho_0\otimes\omega))$. Here, 
$\tilde R(\tilde\rho_0)$ and $\tilde R_N(\tilde\rho_0)$ denote the reachable sets of 
the discrete-time closed quantum control system
\begin{align*}
\tilde \rho_{n+1}=U_n\tilde \rho_n U_n^\dagger\,, \quad \tilde\rho_0\in\mathbb D(\mathcal H\otimes\mathcal K)
\end{align*}
with $U_n\in\lbrace U,V\rbrace$ for all $n\in\mathbb N_0$.
\end{coro}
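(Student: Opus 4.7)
The plan is to take an arbitrary element of $R_N(\rho_0)$, apply Lemma \ref{ch_4_lemma_3} to write it in the commuting-product form $T^k S^{N-k}(\rho_0)$, and then exhibit an admissible control sequence for the closed auxiliary system that reproduces this state up to the partial trace, invoking Corollary \ref{ch_4_koro_3}. The heavy lifting has already been done in Theorem \ref{ch_4_satz_2} and Corollary \ref{ch_4_koro_3}; the present statement is essentially a translation of those dilations into the language of reachable sets.

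More concretely, fix $N \in \mathbb N_0$ and $\rho \in R_N(\rho_0)$. By Lemma \ref{ch_4_lemma_3} there is some $k \in \{0,\ldots,N\}$ with $\rho = T^k S^{N-k}(\rho_0)$, and Corollary \ref{ch_4_koro_3} (with the $\mathcal K$, $\omega$, $U$, $V$ prescribed by the hypothesis) rewrites this as
\begin{align*}
\rho = \operatorname{tr}_{\mathcal K}\bigl( U^{k} V^{N-k}(\rho_0\otimes\omega)(V^\dagger)^{N-k}(U^\dagger)^{k} \bigr).
\end{align*}
For this $k$, I would define the control sequence $(U_n)_{n=0}^{N-1}$ for the closed system by setting $U_n := V$ for $n=0,\ldots,N-k-1$ and $U_n := U$ for $n=N-k,\ldots,N-1$ (with the natural degenerate reading when $k\in\{0,N\}$). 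Since each $U_n$ lies in $\{U,V\}$, this is an admissible control, and evolving $\tilde\rho_0 := \rho_0\otimes\omega$ under it produces
\begin{align*}
\tilde\rho_N = U^{k} V^{N-k}(\rho_0\otimes\omega)(V^\dagger)^{N-k}(U^\dagger)^{k} \in \tilde R_N(\rho_0\otimes\omega).
\end{align*}
Applying $\operatorname{tr}_{\mathcal K}$ then places $\rho$ in $\operatorname{tr}_{\mathcal K}(\tilde R_N(\rho_0\otimes\omega))$, establishing \eqref{eq:ch_4_koro_2_1}.

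The global inclusion $R(\rho_0) \subseteq \operatorname{tr}_{\mathcal K}(\tilde R(\rho_0\otimes\omega))$ then follows immediately by taking the union over $N \in \mathbb N_0$, since the partial trace applied to a union of sets coincides with the union of the partial traces. I do not anticipate any genuine obstacle: the only real choice to be made is the control sequence, and this is forced by the exponents $k$ and $N-k$ appearing in Corollary \ref{ch_4_koro_3}. The degenerate cases $k=0$ and $k=N$ (pure $S$- or $T$-sequences) as well as $N=0$ (where both sides reduce to $\{\rho_0\}$, using $\operatorname{tr}(\omega)=1$) fit into the same framework without additional work.
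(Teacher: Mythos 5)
Your proposal is correct and follows essentially the same route as the paper: Lemma \ref{ch_4_lemma_3} to write a reachable state as $T^kS^{N-k}(\rho_0)$, the dilation from Theorem \ref{ch_4_satz_2} (via Corollary \ref{ch_4_koro_3}) to express it as $\operatorname{tr}_{\mathcal K}\bigl(U^kV^{N-k}(\rho_0\otimes\omega)(V^\dagger)^{N-k}(U^\dagger)^k\bigr)$, and the observation that this argument lies in $\tilde R_N(\rho_0\otimes\omega)$, followed by taking the union over $N$. Your explicit description of the admissible control sequence (first $N-k$ applications of $V$, then $k$ of $U$) merely spells out what the paper leaves implicit in its final inclusion.
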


\begin{proof}
By Lemma \ref{ch_4_lemma_3} and Theorem \ref{ch_4_satz_2}, one has
\begin{align*}
R_N(\rho_0)&=\lbrace T^kS^{N-k}(\rho_0)\,|\,k = 0,\ldots,N \rbrace
\\&=\lbrace \operatorname{tr}_{\mathcal K}\left( {U}^{k}{V}^{{N-k}}(\rho_0\otimes\omega)({V}^\dagger)^{{N-k}}({U}^{\dagger})^{k} \right)\,|\,k = 0,\ldots,N\rbrace
\\&=\operatorname{tr}_{\mathcal K}(\lbrace  {U}^{k}{V}^{{N-k}}(\rho_0\otimes\omega)({V}^\dagger)^{{N-k}}({U}^{\dagger})^{k} \,|\,k = 0,\ldots,N \rbrace)
\subseteq \operatorname{tr}_{\mathcal K}(\tilde R_N(\rho_0\otimes\omega))\,.\qedhere
\end{align*}
\end{proof}

\begin{remark}\label{ch_4_bem_7}
\begin{enumerate}
\item
Note that the unitary channels $U$ and $V$ of Corollary \ref{ch_4_koro_2} do in general not commute, 
so (\ref{eq:ch_4_koro_2_1}) states a proper inclusion rather than an equality for $N>1$.
\item
Consider the dual problem of (\ref{eq:chap:3_2_1}), i.e.~let $T,S \in Q_H(\mathcal H)$ be 
two commuting Heisenberg channels. Of course, one can translate the above results---which we will omit 
here---into the Heisenberg picture via Corollary \ref{coro_Q_H_1}.
However, we want to comment on the result of Davies\cite{davies78} which was already mentioned in the 
introduction and yields a unitary dilation with \emph{commuting} unitary channels, at the cost of our desired
partical trace structure.

Let $G=(\mathbb Z\times\mathbb Z,+)$ with subgroup
$\mathbb S:=\lbrace (N,k)\in G\,|\, N\in\mathbb N_0\text{ and } 0\leq k\leq N\rbrace$ and define the family
$(T_g)_{g\in G}$ of Heisenberg channels via $T_g:= T^k S^{N-k}$ for
$g = (N,k)\in \mathbb S$ and $T_g:=\operatorname{id}_{\mathcal B(\mathcal H)}$ otherwise. Adjusting the proof 
of{\,}\cite[Thm.~3.1]{davies78} to discrete groups and using Corollary \ref{coro_1}, one gets a Hilbert space 
$\mathcal K$, a unitary representation $U$ of $G$ on $\mathcal H\otimes\mathcal K$ and a conditional 
expectation $E$ such that $T_g(B)=E(U_g(B\otimes\operatorname{id}_{\mathcal K})U_g^\dagger)$ for all 
$B\in\mathcal B(\mathcal H)$. As $(U_g)_{g\in G}$ is a representation of $G$ we may consider the commuting 
unitary operators $U_{(1,1)}=:U$, $U_{(1,0)}=:V$ resulting in
\begin{equation*}
T^k S^{N-k}(B)= T_g(B) = E({U}^{k}{V}^{{N-k}}(B\otimes\operatorname{id}_{\mathcal K})({V}^\dagger)^{{N-k}}({U}^{\dagger})^{k})
\end{equation*}
for all $B\in\mathcal B(\mathcal H)$, $N\in\mathbb N_0$ and $k = 0,\ldots,N$. Observe that we did not 
use the fact that $T,S$ commute so this result even holds for arbitrary channels $T$ and $S$ with the 
drawback of $E$ lacking any partial trace structure, see also \cite{Note1}.
\end{enumerate}
\end{remark}

\begin{acknowledgments}
This article is based on a master thesis\cite{vomEnde} which was written at the Institute of 
Mathematics of the University of W\"urzburg. The authors are grateful to Michael M.~Wolf for
drawing their attention to several more recent publications on dilations of completely positive maps.
\end{acknowledgments}

\appendix


\section{Topological Properties of $Q_S (\mathcal H)$}\label{app:monoid}

For the following definition, we refer to \cite[Ch.VI.1]{dunford1963linear}.

\begin{defi}\label{topologies}
Let $\mathcal X$ and $\mathcal Y$ be arbitrary Banach spaces.
\begin{itemize}
\item[(a)] 
The strong operator topology (s.o.t.) on $\mathcal B(\mathcal X,\mathcal Y)$ is the locally convex 
topology induced by the family of seminorms of the form $T\to\Vert Tx\Vert$ with $x\in \mathcal X$.
\item[(b)]
The weak operator topology (w.o.t.)~on $\mathcal B(\mathcal X,\mathcal Y)$ is the locally convex topology 
induced by the family of seminorms of the form $T\to|y(Tx)|$ with $(x,y)\in \mathcal X\times \mathcal Y'$.
\end{itemize}
\end{defi}

\noindent
Note that both topologies, the s.o.t.~as well as the w.o.t., are Hausdorff so limits are unique.\medskip 

By the natural isomorphism $(\mathcal B^1(\mathcal H))' \cong \mathcal B(\mathcal H)$, see Section 
\ref{sec:dualchannel}, one has the following equivalence: A net $(T_\alpha)_{\alpha\in I}$ in
$\mathcal B(\mathcal B^1(\mathcal H))$ converges to $T\in\mathcal B(\mathcal B^1(\mathcal H))$ 
in w.o.t.~if and only if
\begin{align}\label{eq:AppA_1}
\lim_{\alpha\in I}|\operatorname{tr}(BT_\alpha(A))-\operatorname{tr}(BT(A))|=0
\end{align}
for all $A\in\mathcal B^1(\mathcal H)$ and $B\in\mathcal B(\mathcal H)$. 

\begin{remark}[Metrizability of s.o.t.~and w.o.t.~on bounded subsets]\label{rem_metrizable}
At this point one might ask whether the strong or weak operator topology is metrizable. If this is the case, closed 
and sequentially closed sets do coincide which, of course, is of interest for further investigations. 
The following is well known in the literature, cf.~\cite[Thm.~1.2 and 1.13]{Kim}: If $\mathcal X$ is 
separable, then the s.o.t.~is metrizable on bounded subsets of $\mathcal B(\mathcal X)$. If $\mathcal X'$
is also separable, then the w.o.t.~is metrizable on bounded subsets of $\mathcal B(\mathcal X)$.

Now, recall that $\mathcal H$ is assumed to be separable. Therefore it is evident that the subspace of 
finite-rank operators $\mathcal F(\mathcal H)$ and hence $\mathcal B^1(\mathcal H)$ itself, which is 
the $\nu_1$-closure of $\mathcal F(\mathcal H)$ (cf.~\cite[Lemma XI.9.11]{dunford1963linear}), is separable.
Moreover, we already know from Proposition \ref{thm_q_norm_1} that $Q_S(\mathcal H)$ is a subset of the unit ball 
in $\mathcal B(\mathcal B^1(\mathcal H))$. This implies that the s.o.t.~on $Q_S(\mathcal H)$ is metrizable 
and thus convergence, closedness, continuity, etc.~can be fully characterized by sequences. On the other 
hand, it is also well known that $\mathcal B(\mathcal H)$ is not separable with respect to the operator norm 
topology as the non-separable space $\ell^\infty$ can be isometrically embedded into $\mathcal B(\mathcal H)$. 
Hence $(\mathcal B^1(\mathcal H))'$ is not separable and the above metrizability result does not apply to 
the w.o.t.~on $Q_S(\mathcal H)$.

However, one could make use of the result that for \emph{convex sets} in $\mathcal B(\mathcal B^1(\mathcal H))$, the closures with respect to the w.o.t.~and 
the s.o.t coincide, cf.~\cite[Coro.~VI.1.6]{dunford1963linear}. Therefore, in the 
proof of Theorem \ref{thm_monoid} one could focus on the s.o.t. On the other hand, Lemma \ref{lemma_2} ff.~show that a direct approach
via the w.o.t.~is just as simple.
\end{remark}

For clarity of the proof of Theorem \ref{thm_monoid}, we first state some auxiliary results.  

\begin{lemma}\label{lemma_2_b}
For every linear map $S:\mathcal B^1(\mathcal H)\to\mathcal B^1(\mathcal G)$ the following statements
are equivalent.
\begin{itemize}
\item[(a)]
$S$ is positive.
\item[(b)]
For all $ A\in\mathcal B^1(\mathcal H)$ and $B\in\mathcal B(\mathcal G)$ with $A,B\geq 0$, one has 
$\operatorname{tr}(BS(A))\geq 0$.
\end{itemize}
\end{lemma}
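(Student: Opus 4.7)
The plan is to prove the two implications separately, both being essentially direct consequences of the spectral/eigenvalue properties of positive trace-class operators and the definition of positivity via rank-one projections.

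For the direction (a)$\,\Rightarrow\,$(b), I would start from $A\geq 0$ in $\mathcal B^1(\mathcal H)$, so by positivity of $S$, $S(A)\geq 0$ in $\mathcal B^1(\mathcal G)$. Since $S(A)$ is a positive trace-class operator on a separable Hilbert space, it admits a spectral decomposition with an orthonormal eigenbasis $(e_i)_{i\in I}$ of $\mathcal G$, $S(A)e_i=\mu_i e_i$ with $\mu_i\geq 0$ and $\sum_i \mu_i = \operatorname{tr}(S(A)) < \infty$. Computing the trace in this basis,
\begin{align*}
\operatorname{tr}(BS(A))=\sum_{i\in I}\langle e_i,BS(A)e_i\rangle = \sum_{i\in I}\mu_i\langle e_i,Be_i\rangle\,,
\end{align*}
and each term is non-negative because $B\geq 0$ ensures $\langle e_i,Be_i\rangle\geq 0$. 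Here one should briefly note that the series is absolutely convergent (hence well defined and basis-independent) because $BS(A)$ is the product of a bounded operator with a trace-class operator and thus itself trace class.

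For the converse direction (b)$\,\Rightarrow\,$(a), I would fix $A\geq 0$ in $\mathcal B^1(\mathcal H)$ and aim to show $S(A)\geq 0$, i.e.~$\langle x,S(A)x\rangle\geq 0$ for every $x\in\mathcal G$. The key idea is to test against the rank-one projections $B_x:=\langle x,\cdot\rangle x \in \mathcal B(\mathcal G)$, which are manifestly bounded and positive semi-definite. Using \eqref{eq:partial_trace_1}-type computation with any ONB extending $x/\|x\|$, one obtains $\operatorname{tr}(B_x S(A))=\langle x,S(A)x\rangle$. Hypothesis (b) then yields $\langle x,S(A)x\rangle \geq 0$. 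Since $x$ was arbitrary, $S(A)\geq 0$, proving positivity of $S$.

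The argument is essentially a one-liner in each direction, and I do not expect any real obstacle: the only subtlety worth flagging explicitly is that in direction (a)$\,\Rightarrow\,$(b) the products are not assumed to be simultaneously diagonalisable, so the spectral decomposition of $S(A)$ (rather than of $B$, which in general is only bounded and need not be compact) must be used to justify the basis-free computation of $\operatorname{tr}(BS(A))$.
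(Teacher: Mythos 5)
Your proposal is correct, and the direction (b)$\Rightarrow$(a) coincides exactly with the paper's argument: test against the rank-one operators $\langle x,\cdot\rangle x\geq 0$ and use $\operatorname{tr}\bigl(\langle x,\cdot\rangle x\, S(A)\bigr)=\langle x,S(A)x\rangle$. The only divergence is in (a)$\Rightarrow$(b): you diagonalise the positive trace-class (hence compact self-adjoint) operator $S(A)$ and compute $\operatorname{tr}(BS(A))=\sum_i \mu_i\langle e_i,Be_i\rangle\geq 0$ in its eigenbasis, whereas the paper avoids the spectral theorem altogether by conjugating with the square root of $B$, writing $\operatorname{tr}(BS(A))=\operatorname{tr}\bigl(\sqrt{B}\,S(A)\sqrt{B}\bigr)$ and observing that the trace of the positive operator $\sqrt{B}\,S(A)\sqrt{B}$ is non-negative. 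Both are one-line elementary arguments; the paper's version buys independence from compactness of $S(A)$ (it only uses cyclicity of the trace under a bounded factor and positivity of $\sqrt{B}\,S(A)\sqrt{B}$), while yours makes the non-negativity of every summand completely explicit and correctly flags the one subtlety, namely that basis-independence of the trace of $BS(A)$ must be justified via $BS(A)$ being trace class, since $B$ and $S(A)$ need not commute. A cosmetic remark: the identity $\operatorname{tr}(B_xS(A))=\langle x,S(A)x\rangle$ is just the evaluation of the trace in an orthonormal basis containing $x/\|x\|$ and has nothing to do with the partial-trace definition you allude to; this does not affect the validity of the argument.
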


\begin{proof}
(a)$\,\Rightarrow\,$(b): For $A,B\geq 0$ and $S$ positive, we obtain $S(A) \geq 0$ and thus 
\begin{equation*}
\operatorname{tr}(BS(A)) = \operatorname{tr}(\sqrt B S(A)\sqrt B) \geq 0\,,
\end{equation*}
where $\sqrt B\geq 0$ denotes the unique square root of $B$.

\noindent
(b)$\,\Rightarrow\,$(a): Choosing $B :=\langle x,\cdot\rangle x$ for arbitrary $x\in\mathcal G$
yields $B \geq 0$ and 
\begin{align*}
\langle x,S(A)x\rangle=\operatorname{tr}(BS(A))\geq 0\,,
\end{align*}
for all $A \geq 0$. Hence it follows $S(A)\geq 0$ so $S$ is positive.
\end{proof}


\begin{lemma}\label{lemma_2}
Let $(T_\alpha)_{\alpha\in I}$ be a net in $\mathcal B(\mathcal B^1(\mathcal H))$ which converges to $T\in\mathcal B(\mathcal B^1(\mathcal H))$ in w.o.t. Then the following statements hold.
\begin{itemize}
\item[(a)] If $T_\alpha$ is trace-preserving for all $\alpha\in I$ then $T$ is trace-preserving.
\item[(b)] If $T_\alpha$ is positive for all $\alpha\in I$ then $T$ is positive.
\end{itemize}
\end{lemma}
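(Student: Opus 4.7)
The plan is to unpack the w.o.t.\ convergence via the identification $(\mathcal B^1(\mathcal H))' \cong \mathcal B(\mathcal H)$ stated in \eqref{eq:AppA_1}, and then choose the test operator $B$ cleverly in each part. Both statements reduce to the observation that a pointwise limit of a net of nonnegative real numbers is nonnegative (resp.\ of a net of identical real numbers is that same number).

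For part (a), the hypothesis $T_\alpha \to T$ in w.o.t.\ together with \eqref{eq:AppA_1} gives, for every fixed $A \in \mathcal B^1(\mathcal H)$ and $B \in \mathcal B(\mathcal H)$,
\begin{equation*}
\lim_{\alpha \in I} \operatorname{tr}(B T_\alpha(A)) = \operatorname{tr}(B T(A))\,.
\end{equation*}
Specializing to $B := \operatorname{id}_{\mathcal H}$ and using that each $T_\alpha$ is trace-preserving, the left-hand side equals $\operatorname{tr}(T_\alpha(A)) = \operatorname{tr}(A)$ for every $\alpha$, so the (constant) net trivially converges to $\operatorname{tr}(A)$. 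Therefore $\operatorname{tr}(T(A)) = \operatorname{tr}(\operatorname{id}_{\mathcal H} T(A)) = \operatorname{tr}(A)$ for all $A \in \mathcal B^1(\mathcal H)$, i.e.\ $T$ is trace-preserving.

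For part (b), I would invoke Lemma \ref{lemma_2_b}: it suffices to show $\operatorname{tr}(B T(A)) \geq 0$ for all positive semi-definite $A \in \mathcal B^1(\mathcal H)$ and $B \in \mathcal B(\mathcal H)$. Fix such $A$ and $B$. By the same lemma applied to each positive $T_\alpha$, we have $\operatorname{tr}(B T_\alpha(A)) \geq 0$ for every $\alpha \in I$. Since the nonnegative reals form a closed subset of $\mathbb R$, passing to the limit via \eqref{eq:AppA_1} yields $\operatorname{tr}(B T(A)) \geq 0$. Applying Lemma \ref{lemma_2_b} in the reverse direction gives positivity of $T$.

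No step looks delicate, so I do not anticipate a real obstacle; the only point worth checking is that Lemma \ref{lemma_2_b} is phrased symmetrically enough to be used in both directions here, which it is (the equivalence holds for every linear $S:\mathcal B^1(\mathcal H) \to \mathcal B^1(\mathcal G)$, and in particular for $T$ itself, which is assumed to lie in $\mathcal B(\mathcal B^1(\mathcal H))$ so linearity is free). Note also that we never needed metrizability or any compactness argument, so the proof goes through for arbitrary directed index sets $I$, which is why the statement is phrased in terms of nets rather than sequences.
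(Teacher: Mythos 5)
Your proof is correct and matches the paper's own argument essentially verbatim: part (a) is handled by testing against $B=\operatorname{id}_{\mathcal H}$ in \eqref{eq:AppA_1}, and part (b) by combining Lemma \ref{lemma_2_b} (in both directions) with the closedness of $[0,\infty)$ in $\mathbb R$. Nothing further is needed.
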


\begin{proof}
Both statements follow from (\ref{eq:AppA_1}): (a) by choosing $B=\operatorname{id}_{\mathcal H}$ 
and (b) by applying Lemma \ref{lemma_2_b} and taking into account that $[0,\infty)$ is a closed subset
of $\mathbb R$.
\end{proof}

%
For the proof of our next result we recall that $\mathcal B^1(\mathcal H\otimes\mathbb C^m)$ and 
$\mathcal B^1(\mathcal H)\otimes\mathbb C^{m\times m}$ can be identified as follows. Any 
$A\in\mathcal B(\mathcal H\otimes\mathbb C^m)$ can be represented as $A=\sum_{i,j=1}^m A_{ij}\otimes E_{ij}$ 
with the standard basis $(E_{ij})_{i,j=1}^m$ of $\mathbb C^{m\times m}$ and appropriate 
$A_{ij}\in\mathcal B(\mathcal H)$. Then, the following statements are equivalent 
\cite[p.~33-34]{Kraus}.
\begin{itemize}
\item[(a)] $A\in\mathcal B^1(\mathcal H\otimes\mathbb C^m)$
\item[(b)] $A_{ij}\in\mathcal B^1(\mathcal H)$ for all $i,j\in\lbrace 1,\ldots,m\rbrace$
\end{itemize}
%

\begin{lemma}\label{lemma_3}
Let $(T_\alpha)_{\alpha\in I}$ be a net in $\mathcal B(\mathcal B^1(\mathcal H))$ converging to $T\in\mathcal B(\mathcal B^1(\mathcal H))$ in w.o.t.
Then, for all $m\in\mathbb N$, the net 
$(T_\alpha\otimes\operatorname{id}_m)_{\alpha\in I}$
converges to
$T\otimes\operatorname{id}_m\in\mathcal B(\mathcal B^1(\mathcal H\otimes\mathbb C^m))$ in w.o.t.
\end{lemma}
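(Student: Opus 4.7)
The plan is to verify the defining convergence condition (\ref{eq:AppA_1}) for the tensored net, namely
\[
\lim_{\alpha\in I}\operatorname{tr}\bigl(B\,(T_\alpha\otimes\operatorname{id}_m)(A)\bigr)=\operatorname{tr}\bigl(B\,(T\otimes\operatorname{id}_m)(A)\bigr)
\]
for arbitrary $A\in\mathcal B^1(\mathcal H\otimes\mathbb C^m)$ and $B\in\mathcal B(\mathcal H\otimes\mathbb C^m)$, and then to reduce this to the assumed w.o.t.\ convergence of $(T_\alpha)$ via the matrix representation recalled just before the statement.

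First I would fix the standard basis $(E_{ij})_{i,j=1}^m$ of $\mathbb C^{m\times m}$ and write
\[
A=\sum_{i,j=1}^m A_{ij}\otimes E_{ij},\qquad B=\sum_{k,l=1}^m B_{kl}\otimes E_{kl},
\]
where $A_{ij}\in\mathcal B^1(\mathcal H)$ (by the characterization cited above) and $B_{kl}\in\mathcal B(\mathcal H)$ (since any bounded operator on $\mathcal H\otimes\mathbb C^m$ admits such a block decomposition with bounded entries). Applying $T_\alpha\otimes\operatorname{id}_m$ componentwise gives $(T_\alpha\otimes\operatorname{id}_m)(A)=\sum_{i,j}T_\alpha(A_{ij})\otimes E_{ij}$, and analogously for $T\otimes\operatorname{id}_m$.

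Next I would exploit multiplicativity of the trace on elementary tensors together with $E_{kl}E_{ij}=\delta_{li}E_{kj}$ and $\operatorname{tr}(E_{kj})=\delta_{kj}$ to obtain the finite-sum identity
\[
\operatorname{tr}\bigl(B\,(T_\alpha\otimes\operatorname{id}_m)(A)\bigr)=\sum_{i,j=1}^m\operatorname{tr}\bigl(B_{ji}\,T_\alpha(A_{ij})\bigr),
\]
and the analogous identity for $T$ in place of $T_\alpha$. By the hypothesis and (\ref{eq:AppA_1}), each of the $m^2$ scalar nets $\operatorname{tr}(B_{ji}T_\alpha(A_{ij}))$ converges to $\operatorname{tr}(B_{ji}T(A_{ij}))$, so the finite sum converges as well, which is exactly what we need.

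The only delicate point is the legitimacy of the matrix decomposition and the termwise trace computation, i.e.\ verifying that $A\in\mathcal B^1(\mathcal H\otimes\mathbb C^m)$ really does decompose with trace-class entries (so that each $\operatorname{tr}(B_{ji}T_\alpha(A_{ij}))$ makes sense) and that the bilinear pairing $\operatorname{tr}(B(\cdot))$ distributes over the finite sum. Both follow directly from the recalled identification $\mathcal B^1(\mathcal H\otimes\mathbb C^m)\cong\mathcal B^1(\mathcal H)\otimes\mathbb C^{m\times m}$ and linearity/continuity of the trace in this finite-dimensional tensor factor, so there is no analytic obstacle beyond bookkeeping.
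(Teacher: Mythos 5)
Your proposal is correct and follows essentially the same route as the paper: decompose $A$ and $B$ into blocks $A_{ij}\otimes E_{ij}$ and $B_{kl}\otimes E_{kl}$, reduce the trace pairing to the finite sum $\sum_{i,j}\operatorname{tr}\bigl(B_{ji}(T_\alpha-T)(A_{ij})\bigr)$, and invoke the assumed w.o.t.\ convergence termwise. The paper's proof is exactly this computation (with indices relabeled), so there is nothing to add.
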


\begin{proof}
According to (\ref{eq:AppA_1}) we have to show
\begin{align}\label{eq:App1_2}
\lim_{\alpha\in I}|\operatorname{tr}(B(T_\alpha\otimes\operatorname{id}_m-\,T\otimes\operatorname{id}_m)A)|= 0
\end{align}
for all $A\in\mathcal B^1(\mathcal H\otimes\mathbb C^m)$ and $B\in\mathcal B(\mathcal H\otimes\mathbb C^m)$. 
As seen above 
every $A\in\mathcal B^{1}(\mathcal H\otimes\mathbb C^m)$ and $B\in\mathcal B(\mathcal H\otimes\mathbb C^m)$ 
can be represented as finite linear combinations of elements 
$A_{ij}\otimes E_{ij}\in \mathcal B^{1}(\mathcal H)\otimes \mathbb C^{m\times m}$
and $B_{ij}\otimes E_{ij}\in\mathcal B(\mathcal H)\otimes \mathbb C^{m\times m}$, respectively,
with $i,j = 1,\ldots,m$. Hence
\begin{align*}
\operatorname{tr}(B(T_\alpha\otimes\operatorname{id}_m-\,T\otimes\operatorname{id}_m)A) = 
\operatorname{tr}(B((T_\alpha-T)\otimes\operatorname{id}_m)A)=\sum\nolimits_{i,j=1}^m \operatorname{tr}(B_{ij}(T_\alpha-T)(A_{ji}))
\end{align*}
so convergence of $(T_\alpha\otimes\operatorname{id}_m)_{\alpha\in I}$ can easily be related to the convergence 
of $(T_\alpha)_{\alpha\in I}$.
\end{proof}

Now for the main proof of this section.

\begin{proof}[Proof of Theorem \ref{thm_monoid}]
Since every linear and positive operator on $\mathcal B^1(\mathcal H)$ is naturally norm bounded 
as a simple consequence of{\,} \cite[Ch.~2, Lemma~2.1]{Davies}
, the
set $Q_S(\mathcal H)$ of all Schr\"odinger channels is a bounded subset of 
$\mathcal B(\mathcal B^1(\mathcal H))$. Now it is readily verified that $Q_S(\mathcal H)$ is a 
convex subsemigroup of $\mathcal B(\mathcal B^1(\mathcal H))$, cf.\cite[Ch.4.3]{Heinosaari}.
Next consider a net $(T_\alpha)_{\alpha\in I}$ in $Q_S(\mathcal H)$ converging to some $T\in \mathcal B(\mathcal B^1(\mathcal H))$ in w.o.t. 
By Lemma \ref{lemma_2} (a), the map $T$ is trace-preserving and by Lemma \ref{lemma_3} 
$(T_\alpha\otimes\operatorname{id}_m)_{\alpha\in I}$ converges to $T\otimes\operatorname{id}_m$ 
with respect to the w.o.t. 
Then applying Lemma \ref{lemma_2} (b) to the net $(T_\alpha\otimes\operatorname{id}_m)_{\alpha\in I}$ yields 
that $T$ is also $m$-positive for all $m\in\mathbb N$. Hence $T$ is a Schr\"odinger quantum channel 
and $Q_S(\mathcal H)$ is closed in $\mathcal B(\mathcal B^1(\mathcal H))$ with respect to the w.o.t. 
The well-known fact that the w.o.t.~is weaker than the s.o.t.~and the uniform operator topology 
concludes the proof.
\end{proof}

\begin{remark}
Note that in the above proof we did not explicitely use the fact that domain and range of the operator 
$T$ coincides. Therefore, the convexity and closedness results trivially extend to $Q_S(\mathcal H,\mathcal G)$.
\end{remark}

\section{Glossary on Dilations}\label{sec:ditalions}

\noindent For the sake of self-containedness, we recall some basic terminology concerning different types
of dilations of linear contractions. Let us start with the Banach space case.

\begin{defi}\label{def:dilation-Banach-space}
Let $\mathcal X$ be an arbitrary Banach space.
\begin{enumerate}
\item
Let $T:\mathcal X \to \mathcal X$ be a linear contraction, i.e.~$\|T\| \leq 1$. A dilation $(\mathcal Y,\hat T,J,E)$ of $T$ consists
of a Banach space $\mathcal Y$ and a triple of maps $(\hat{T}, J, E)$ with
\begin{equation}\label{eq:dilation-general}
T = E \circ \hat{T} \circ J \quad \text{and} \quad E \circ J = \operatorname{id}_{\mathcal X}\,,
\end{equation}
where the linear maps $\hat{T}$, $J$ and $E$ satisfy:
\begin{enumerate}
\item
$\hat{T}: \mathcal Y \to \mathcal Y$ is a bi-isometry (i.e.~$\hat T$ is bijective and $\hat T,\hat T^{-1}$ are isometries) 
\item
$J: \mathcal X \to \mathcal Y$ is an isometric embedding of $\mathcal X$ in $\mathcal Y$.
\item
$E: \mathcal Y \to \mathcal X$ 
has operator norm $\|E\| = 1$.
\end{enumerate}
\item
Let $S \subset G$ be a semigroup of a group $G$ and $(T_g)_{g \in S}$ be a representation of $S$ with 
values in the contraction semigroup of $\mathcal X$. A dilation $(\mathcal Y,(\hat{T}_g)_{g \in G_0}, J, E)$
of $(T_g)_{g \in S}$ consists of a Banach space $\mathcal Y$, a subgroup $G_0 \subset G$ and a triple 
$((\hat{T}_g)_{g \in G_0}, J, E)$ with
\begin{equation*}
T_g = E \circ \hat{T}_g \circ J \quad \text{and} \quad E \circ J = \operatorname{id}_{\mathcal X}
\end{equation*}
for all $g \in S\subset G_0$, where $(\hat{T}_g)_{g \in G_0}$ is a linear representation of $G_0$
with values in the isometry group of $\mathcal Y$ and $J$, $E$ as before.
\end{enumerate}
\end{defi}

\begin{remark}\label{rem:classical-dilations}
\begin{enumerate}
\item
Note that $E \circ J = \operatorname{id}_{\mathcal X}$ implies that $E$ is onto and $J$ is injective. 
Furthermore, $J \circ E:\mathcal Y \to \mathcal Y$ is a projection of norm $1$ from 
$\mathcal Y$ onto the range of $J$.
\item
If $S$ is assumed to be abelian and there exists a ``dilation'' of $(T_g)_{g \in S}$ such that
$\hat{T}_g$ is well-defined for all $g \in S$ then $(\hat{T}_g)_{g \in S}$ obviously extends
to a proper dilation in the above sense, where $G_0$ can be chosen to be the subgroup generated 
by $S$. For non-abelian $S$, however, this extension property is not obvious.
\item 
Choosing $S := \mathbb{N}_0$ and $T_n := T^n$ in Def.~\ref{def:dilation-Banach-space}.2, where 
$T:\mathcal H \to \mathcal H$ is a linear contraction, we recover the ``classical'' concept of a
linear diliation (see also Rem.~\ref{rem_subspace}.2). To distinguish such a dilation of $T$---which in our sense is actually a dilation of $(T^n)_{n\in\mathbb N_0}$---from a dilation of $T$ in the sense of 
Def.~\ref{def:dilation-Banach-space}.1 one sometimes calls the latter a ``dilation of first order'', 
cf.~\cite{Kuemmerer83}.
\item
Once continuity comes into play, things become more sublte as one can either require that the continuity 
properties of $g \mapsto T_g$ are preserved by $g \mapsto \hat{T}_g$ (which in some cases is
unfeasible) or allow that the continuity is relaxed, cf., e.g., \cite[Rem.~17.5]{EvansLewisOverview77}.
For our applications, however, this is not an issue as we are only concerned with the case
$S = \mathbb{N}_0$.
\end{enumerate}
\end{remark}

In the context of quantum channels (or, more generally, completely positive maps) various specializations of
the above definitions to 
\begin{itemize}
\item abstract $C^*$- or $W^*$-algebras
\item Heisenberg quantum channels 
\item and Schr\"odinger quantum channels
\end{itemize}
are available in the literature. For more details we refer to \cite{EvansLewisOverview77,Gaebler2015}.

\begin{defi}\label{def:dilation-von-Neumann-algebras}
Let $\mathcal A$ be a unital $C^*$-algebra.
\begin{enumerate}
\item
Let $T: \mathcal A \to \mathcal A$ be linear, completely positive and unital (i.e.~identity preserving).
A dilation $(\mathfrak A,\hat T,J,E)$ of $T$ consists of a unital $C^*$-algebra $\mathfrak A$
and a triple of maps $(\hat T,J,E)$ with
\begin{equation}\label{eq:cp-dilation}
T = E \circ \hat{T} \circ J \quad \text{and} \quad E \circ J = \operatorname{id}_{\mathcal A}\,,
\end{equation}
where $\hat{T}$, $J$ and $E$ satisfy:
\begin{enumerate}
\item
$\hat T:\mathfrak A\to\mathfrak A$ is a $*$-automorphism. 
\item
$J: \mathcal A \to \mathfrak A$ is a $*$-homomorphism of $\mathcal A$ into $\mathfrak A$.
\item
$E:\mathfrak A\to\mathcal A$ is linear and completely positive 
with operator norm $\|E\| = 1$.
\end{enumerate}
\item
Let $S \subset G$ be a semigroup of a group $G$ and let $(T_g)_{g\in S}$ be a semigroup representation
of $S$ with values in the set of completely positive, unital maps on $\mathcal A$. A dilation
$(\mathfrak A,(\hat{T}_g)_{g \in G_0}, J, E)$ of $(T_g)_{g \in S}$ consists of a unital 
$C^*$-algebra $\mathfrak A$, a subgroup $G_0 \subset G$ and a triple $((\hat{T}_g)_{g \in G_0}, J, E)$ with
\begin{equation*}
T_g = E \circ \hat{T}_g \circ J \quad \text{and} \quad E \circ J = \operatorname{id}_{\mathcal A}\,,
\end{equation*}
for all $g \in S\subset G_0$, where $(\hat{T}_g)_{g \in G_0}$ is a representation of $G_0$ with values in the
$*$-automorphism group of $\mathfrak A$ and $J$, $E$ as before.
\end{enumerate}
If, in addition, $J(\operatorname{id}_{\mathcal A})=\operatorname{id}_{\mathfrak A}$ then the dilation is said to be unital. On the other hand, if $\mathcal A$ is even a $W^*$-algebra, then all involved maps are in general assumed to 
be ultraweakly continuous. 
\end{defi}

\begin{remark}\label{rem_19}
\begin{enumerate}
\item
Let $\mathcal A$, $\mathcal B$ be unital $C^*$-algebras and let $T: \mathcal A \to \mathcal B$ be 
unital. Then positivity of $T$ is equivalent to the norm condition $\|T\| = 1$, cf.~\cite{russo1966,Blackadar06}. 
In particular, one has $\|T\| = \|T(\operatorname{id_{\mathcal A}})\|$ for every positive map 
$T: \mathcal A \to \mathcal B$ so unitality of $T$ implies that $T$ is a contraction.
\item
As every $*$-homomorphism is trivially completely positive and every injective $*$-homomorphism is always 
isometric, \eqref{eq:cp-dilation} yields a dilation in the sense of \eqref{eq:dilation-general}. 
Moreover, if a dilation is unital, then $E$ is unital as well because \eqref{eq:cp-dilation} implies 
$\operatorname{id}_{\mathcal A}=(E\circ J)(\operatorname{id}_{\mathcal A})=E(\operatorname{id}_{\mathfrak A})$.
\item
Every $*$-automorphism $\hat T:\mathfrak A\to \mathfrak A$ on a unital $C^*$-algebra $\mathfrak A$ is unital itself because of
$$
\hat T(\operatorname{id}_{\mathfrak A})=\hat T(\operatorname{id}_{\mathfrak A})\operatorname{id}_{\mathfrak A}=\hat T(\operatorname{id}_{\mathfrak A})\hat T(\hat T^{-1}(\operatorname{id}_{\mathfrak A}))=\hat T(\operatorname{id}_{\mathfrak A} T^{-1}(\operatorname{id}_{\mathfrak A}))=\operatorname{id}_{\mathfrak A}\,.
$$
\item
Let $A$ be a $C^*$-subalgebra of a unital $C^*$-algebra $\mathfrak A$, i.e. 
$\operatorname{id}_{\mathfrak A} \in \mathcal A \subset \mathfrak A$.  Then a linear map $E:\mathfrak A\to\mathcal A$ 
is said to be a \emph{conditional expectation} (of $\mathfrak A$ onto $\mathcal A$) if it is 
completely positive with norm $\|E\| =1$ and satisfies
\begin{equation}\label{eq:cond_exp-1}
E(AB)=AE(B) \quad \text{for all $A \in\mathcal A$ and $B\in\mathfrak A$}\,.
\end{equation} 
Obviously, \eqref{eq:cond_exp-1} implies that $E$ is a unital (cf.~Rem.~\ref{rem_19}.1) projection onto 
$\mathcal A$, that is $E(A) = A$ for all $A\in\mathcal A$. The converse is also true, i.e.~every projection $E:\mathfrak A\to\mathcal A$ of norm 
$\|E\| =1$ is a conditional expectation, cf.~\cite[Thm.~II.6.10.2]{Blackadar06} and \cite{tomiyama1957}. 
Moreover, exploiting that $E(B^*) = E(B)^*$ for all $B\in\mathfrak A$, which results from
the (complete) positivity of $E$, one can easily show that \eqref{eq:cond_exp-1} is equivalent to 
\begin{equation*} 
E(BA) = E(B)A \quad \text{for all $A \in\mathcal A$ and $B\in\mathfrak A$}
\end{equation*} 
and, since $\mathcal A$ is unital, also to
\begin{equation} \label{eq:cond_exp}
E(A_1BA_2)=A_1E(B)A_2 \quad \text{for all $A_1,A_2\in\mathcal A$ and $B\in\mathfrak A$}\,.
\end{equation}
In the literature, \eqref{eq:cond_exp-1} is often replaced by the ``more symmetric'' condition
\eqref{eq:cond_exp}.
Now if $\mathcal A\not\subset\mathfrak A$, but $\mathcal A$ can be embedded into $\mathfrak A$ via some unital, 
injective $*$-homomorphism $J:\mathcal A\to\mathfrak A$, then $E:\mathfrak A \to \mathcal A$ is said 
to be a \emph{conditional expectation with corresponding injection} $J$, if $E$ is completely positive
and $E \circ J =\operatorname{id}_{\mathcal A}$. Note that in this case $E$ is also unital (because $J$ is
unital) and thus of norm one. Hence the composed map $J\,\circ\, E:\mathfrak A\to J(\mathcal A)\subset\mathfrak A$ 
is a projection of norm one and thus a conditional expectation in the above sense. Thus every unital 
dilation gives rise to a conditional expectation $E$ with corresponding injection $J$.
\end{enumerate}
\end{remark}

Now Definition \ref{def:dilation-von-Neumann-algebras} directly applies to Heisenberg 
channels. Taking into account that the only invertible channels are the unitary ones
(cf.~Prop.~\ref{ch_3_Theorem_14}) we obtain the following concept.

\begin{defi}\label{def:dilation-Heisenberg-channles}
\begin{enumerate}
\item
Let $T\in Q_H(\mathcal H)$ be a Heisenberg quantum channel, i.e.~$T:\mathcal B(\mathcal H)\to\mathcal B(\mathcal H)$
is linear, ultraweakly continuous, completely positive and unital. A unitary dilation $(\mathcal{K},U,J,E)$ of
$T$ consists of a Hilbert space $\mathcal{K}$ and a triple of maps $(U,J,E)$ with 
\begin{equation*}
T = E \circ {\operatorname{Ad}_U} \circ J \quad \text{and} \quad E \circ J = \operatorname{id}_{\mathcal B(\mathcal H)}\,,
\end{equation*}
where $U$, $J$ and $E$ satisfy
\begin{enumerate}
\item 
$U\in\mathcal B(\mathcal{K})$ is unitary.
\item
$J:\mathcal B(\mathcal{H}) \to \mathcal B(\mathcal{K})$ is an ultraweakly continuous $*$-homomorphism of
$\mathcal B(\mathcal{H})$ into $\mathcal B(\mathcal{K})$.
\item
$E:\mathcal B(\mathcal{K})\to\mathcal B(\mathcal H)$ is linear, ultraweakly continuous and completely positive with operator norm $\|E\|=1$.
\end{enumerate}
\item
Let $S \subset G$ be a semigroup of a group $G$ and let $(T_g)_{g\in S}$ be a semigroup representation of $S$
with values in the set of Heisenberg quantum channels $Q_H(\mathcal H)$. A unitary dilation 
$(\mathcal{K},(U_g)_{g \in G_0}, J, E)$ of $(T_g)_{g \in S}$ consists of a Hilbert space $\mathcal{K}$, a subgroup
$G_0 \subset G$ and a triple $((U_g)_{g \in G_0}, J, E)$ with
\begin{equation*}
T_g = E \circ \operatorname{Ad}_{U_g} \circ J \quad \text{and} \quad E \circ J = \operatorname{id}_{\mathcal B(\mathcal H)}\,,
\end{equation*}
for all $g \in S\subset G_0$, where $(U_g)_{g \in G_0}$ is a representation of $G_0$ with values in the unitary group 
on $\mathcal K$ and $J,E$ as before.\vspace{4pt}
\end{enumerate}
If, in addition, $J(\operatorname{id}_{\mathcal H})=\operatorname{id}_{\mathcal K}$ then the dilation is said to be unital.
\end{defi}

\noindent 
If the dilation is unital, then $J\in Q_H(\mathcal H,\mathcal K)$ and $E\in Q_H(\mathcal K,\mathcal H)$ are Heisenberg channels (cf.~Rem.~\ref{rem_19}.2).
Finally, this concept can be transferred to the Schr\"odinger quantum channels via duality (cf.~Section
\ref{sec:dualchannel}).

\begin{defi}\label{def:dilation-Schroedinger-channles}
\begin{enumerate}
\item
Let $T\in Q_S(\mathcal H)$ be a Schr\"odinger quantum channel, i.e. $T:\mathcal B^1(\mathcal H)\to\mathcal B^1(\mathcal H)$ is linear, completely positive and trace-preserving. A unitary dilation $(\mathcal{K},U,J,E)$ of $T$ consists of
a Hilbert space $\mathcal{K}$ and a triple of maps $(U,J,E)$ with 
\begin{equation*}
T = E \circ {\operatorname{Ad}_U} \circ J \quad \text{and} \quad E \circ J = \operatorname{id}_{\mathcal B^1(\mathcal H)}\,,
\end{equation*}
where $U$, $J$ and $E$ satisfy
\begin{enumerate}
\item 
$U\in\mathcal B(\mathcal{K})$ is unitary.
\item
$J:\mathcal B^1(\mathcal{H})\to\mathcal B^1(\mathcal K)$ is linear and completely
positive with operator norm $\|J\|=1$. 
\item
$E:\mathcal B^1(\mathcal{K}) \to \mathcal B^1(\mathcal{H})$ is linear, completely positive and satisfies
\begin{equation}\label{eq:dual_star_hom}
E(E^*(B)A)=BE(A) \quad \text{for all $B\in\mathcal B(\mathcal H)$ and $A\in\mathcal B^1(\mathcal K)$,}
\end{equation}
where $E^*$ is the dual channel of $E$.
\end{enumerate}
\item
Let $S \subset G$ be a semigroup of a group $G$ and let $(T_g)_{g\in S}$ be a semigroup representation of $S$
with values in the set of Schr\"odinger quantum channels $Q_S(\mathcal H)$. A unitary dilation
$(\mathcal{K},(U_g)_{g \in G}, J, E)$ of $(T_g)_{g \in S}$ consists of a Hilbert space $\mathcal{K}$, a subgroup
$G_0 \subset G$ and a triple $((U_g)_{g \in G_0}, J, E)$ with
\begin{equation*}
T_g = E \circ \operatorname{Ad}_{U_g} \circ J \quad \text{and} \quad E \circ J = \operatorname{id}_{\mathcal B^1(\mathcal H)}\,,
\end{equation*}
for all $g \in S\subset G_0$, where $(U_g)_{g \in G_0}$ is a representation of $G_0$ with values in the unitary group 
on $\mathcal K$ and $J,E$ as before.\vspace{4pt}
\end{enumerate}
If, in addition, $E$ is trace-preserving then the dilation is said to be trace-preserving.
\end{defi}

\begin{remark}
\begin{enumerate}
\item 
Property \eqref{eq:dual_star_hom} which looks quite similar to \eqref{eq:cond_exp-1} implies (by direct 
computation) that the dual channel $E^*$ is a $*$-homomorphism. Moreover, $E^*$ is ultraweakly continuous
as this holds for every dual channel. Conversely, for any ultraweakly continuous $*$-homomorphism $J$ 
from Definition \ref{def:dilation-Heisenberg-channles} one can show that together with its 
pre-dual channel, it satisfies \eqref{eq:dual_star_hom}. In this sense, the dilation definitions \ref{def:dilation-Heisenberg-channles} and \ref{def:dilation-Schroedinger-channles} are dual to each other. Similar as for 
\eqref{eq:cond_exp-1}, one can conclude that \eqref{eq:dual_star_hom} is equivalent to
\begin{equation*}
E(AE^*(B))=E(A)B \quad \text{for all $B\in\mathcal B(\mathcal H)$ and $A\in\mathcal B^1(\mathcal K)$.}
\end{equation*}
\item
If a dilation is trace-preserving, then $J$ is trace-preserving as well (cf.~Remark \ref{rem_19}.2.) so in particular, $J\in Q_S(\mathcal H,\mathcal K)$ and $E\in Q_S(\mathcal K,\mathcal H)$ are Schr\"odinger channels.
\item
Corollary \ref{coro_1} shows that for every Heisenberg channel $T\in Q_H(\mathcal H)$ there exists a unitary (and even unital) dilation 
of $T$ of the following type $(\mathcal H\otimes\mathcal K,\operatorname{Ad}_U,i_{\mathcal K},\operatorname{tr}_\omega)$,
where $\mathcal K$ is a separable Hilbert space, $\omega\in\mathbb D(\mathcal K)$ a pure state and 
$U\in\mathcal B(\mathcal H\otimes\mathcal K)$ a unitary operator. Such a dilation is also said to be of tensor type.
This result holds analogously for every $T\in Q_S(\mathcal H)$ by Theorem \ref{thm1}.
\end{enumerate}
\end{remark}

\bibliography{article_vomende_dirr_unitary_dilations_update_2019_03_11}
\end{document}